\keywords{homotopy type theory, cellular cohomology, mechanized reasoning}
\newcommand{\vvmathbb}[1]{\mathbb{#1}}
\newenvironment{tightcenter}
 {\parskip=0pt\par\nopagebreak\centering}
 {\par\noindent\ignorespacesafterend}
\tikzset{>=txto}
\newlength\myfontsize
\newlength\mythinwidth
\newlength\mythickwidth
\newlength\mydoubledist
\tikzset{every picture/.style={line width={\mythinwidth}}}
\tikzset{thin/.style={line width={\mythinwidth}}}
\tikzset{thick/.style={line width={\mythickwidth}}}
\tikzset{arrow/.style={->,node font=\small}}
\tikzset{equiv/.style={-,node font=\small,double,%
  double distance={\mydoubledist},%
  shorten >=-2pt,shorten <=-2pt,%
  arrows={Rectangle[color=white,length=2pt,width=4pt]-Rectangle[color=white,length=2pt,width=4pt]}}}
\tikzset{thick red/.style={thick,color=darkred}}
\tikzset{edgelabel/.style = {midway,font=\small}}
\tikzset{slopedlabel/.style = {edgelabel,sloped}}
\definecolor{darkred}{RGB}{170,0,0}
\protected\def\tikz@nonactivecolon{\ifmmode\mathrel{\mathop\ordinarycolon}\else:\fi} 
\newcommand{\HoTT}{HoTT}
\newcommand{\Lean}{\textsc{Lean}}
\NewDocumentCommand\NewPairedDelimiter{mmmmO{0mu}O{0mu}}{%
  \NewDocumentCommand#2{mmm}{%
    \IfNoValueTF{##2}
      {\IfBooleanTF{##1}
        {\mskip#6\mleft#3\mskip#5##3\mskip#5\mright#4\mskip#6}
        {\mskip#6#3\mskip#5##3\mskip#5#4\mskip#6}}
      {\mskip#6\mathopen{##2#3}\mskip#5##3\mskip#5\mathclose{##2#4}\mskip#6}%
  }
  \NewDocumentCommand#1{som}{#2{##1}{##2}{##3}}
}
\NewDocumentCommand\NewTripleDelimiter{mmmmmO{0mu}O{0mu}O{0mu}}{%
  \NewDocumentCommand#2{mmmm}{%
    \IfNoValueTF{##2}
      {\IfBooleanTF{##1}
        {\mskip#8\mleft#3\mskip#7##3\mskip#6\middle#4\mskip#6##4\mskip#7\mright#5\mskip#8}
        {\mskip#8#3\mskip#7##3\mskip#6#4\mskip#6##4\mskip#7#5\mskip#8}}
      {\mskip#8\mathopen{##2#3}\mskip#7##3\mskip#6{##2#4}\mskip#6##4\mskip#7\mathclose{##2#5}\mskip#8}%
  }
  \NewDocumentCommand#1{somm}{#2{##1}{##2}{##3}{##4}}
}
\NewDocumentCommand\NewMiddleDelimiter{mmmO{0mu}}{%
  \NewDocumentCommand#2{mmmm}{%
    \IfNoValueTF{##2}
      {\IfBooleanTF{##1}
        {\mleft.##3\mskip#4\middle#3\mskip#4##4\mright.}
        {##3\mskip#4#3\mskip#4##4}}
      {\mathopen{##2.}##3\mskip#4{##2#3}\mskip#4##4\mathclose{##2.}}%
  }
  \NewDocumentCommand#1{somm}{#2{##1}{##2}{##3}{##4}}
}
\NewPairedDelimiter{\parens}{\rawparens}{\lparen}{\rparen}
\NewPairedDelimiter{\brackets}{\rawbrackets}{[}{]}
\NewPairedDelimiter{\braces}{\rawbraces}{\{}{\}}
\NewPairedDelimiter{\angles}{\rawangles}{\langle}{\rangle}
\NewPairedDelimiter{\verts}{\rawverts}{\lvert}{\rvert}[1mu]
\NewPairedDelimiter{\Verts}{\rawVerts}{\lVert}{\rVert}[1mu]
\newcommand{\dummy}{\underline{\hspace{.5em}}}
\mathchardef\rawmathhyphen="2D
\newcommand{\mathhyphen}{\mathrm{\rawmathhyphen}}
\NewTripleDelimiter{\presubst}{\rawpresubst}{[}{/}{]}[1mu]
\NewDocumentCommand\subst{somm}{\mskip1mu\rawpresubst{#1}{#2}{#3}{#4}}
\newcommand{\deq}{\mathrel{\vcentcolon\oldequiv}}
\NewDocumentCommand\NewElim{mm}{%
  \NewDocumentCommand#1{msomsom}{%
    \mathrm{#2}_{##1}\rawbrackets{##2}{##3}{##4}\rawparens{##5}{##6}{##7}}}
\newcommand{\hyphennondep}{\mathrm{\mathhyphen nd}}
\NewElim{\elim}{elim}
\NewElim{\elimnondep}{elim\hyphennondep}
\newcommand{\UU}{\mathcal{U}}
\NewDocumentCommand\pair{somm}{\rawangles{#1}{#2}{#3; #4}}
\DeclareMathOperator{\fst}{fst}
\DeclareMathOperator{\snd}{snd}
\NewDocumentCommand\lam{somo}{%
  \IfNoValueTF{#4}
    {\lambda{#3}.}
    {\lambda\rawparens{#1}{#2}{#3{:}#4}.}}
\newcommand{\fcomp}{\circ}
\newcommand{\point}[1]{\mathrm{pt}_{#1}}
\newcommand{\unittype}{\vvmathbb{1}}
\NewDocumentCommand{\transport}{somsomm}{
  \mathrm{transport}\rawbrackets{#1}{#2}{#3}\rawparens{#4}{#5}{#6; #7}}
  \LetLtxMacro{\oldequiv}{\equiv}
  \renewcommand{\equiv}{\simeq}}
\newcommand{\NN}{\vvmathbb{N}}
\newcommand{\sphere}[1]{\vvmathbb{S}^{#1}}
\newcommand{\SOne}{\sphere{1}}
\NewDocumentCommand{\trunc}{somm}{\rawVerts{#1}{#2}{#3}_{#4}}
\NewDocumentCommand{\Tproj}{somm}{\rawverts{#1}{#2}{#3}_{#4}}
\NewDocumentCommand{\Ttransport}{msomsomm}{
  \mathrm{transport}_{#1}\rawbrackets{#2}{#3}{#4}\rawparens{#5}{#6}{#7; #8}}
\NewMiddleDelimiter{\setquot}{\rawsetquot}{/\mskip-1mu}[2mu]
\NewDocumentCommand{\Qproj}{som}{\rawbrackets{#1}{#2}{#3}}
\newcommand{\susp}{\mathrm{susp}}
\newcommand{\cofiber}{\mathrm{cofiber}}
\newcommand{\Cbase}{\mathrm{cfbase}}
\newcommand{\Ccodom}{\mathrm{cfcod}}
\let\quotient\setquot
\newcommand{\pto}{\mathrel{{\cdot}{\to}}}
\newcommand{\wedgetype}[2]{#1 \vee #2}
\newcommand{\bigwedgetype}[2]{\bigvee_{#1} #2}
\newcommand{\booltype}{\vvmathbb{2}}
\newcommand{\ZZ}{\vvmathbb{Z}}
\DeclareMathOperator{\image}{im}
\DeclareMathOperator{\coker}{coker}
\newcommand{\zerogroup}{\vvmathbb{0}}
\newcommand{\EilenbergSteenrod}{Eilenberg--Steenrod}
\newcommand{\proj}{\mathrm{proj}}
\newcommand{\boundary}{\partial}
\newcommand{\coboundary}{\delta}
\NewDocumentCommand{\formalsum}{som}{\ZZ\rawbrackets{#1}{#2}{#3}}
\newcommand{\augmentation}{\epsilon}
\newcommand{\coeff}{\mathrm{coeff}}
\NewDocumentCommand{\axiom}{m}{\textbf{#1}}
\title{Cellular Cohomology in Homotopy Type Theory}
\author{Ulrik Buchholtz}
\email{buchholtz@mathematik.tu-darmstadt.de}
\author{Kuen-Bang {Hou (Favonia)}}
\email{favonia@umn.edu}
\begin{document}

\begin{abstract}
  We present a development of cellular cohomology in homotopy type theory.
  Cohomology associates to each space a sequence of abelian groups
  capturing part of its structure,
  and has the advantage over homotopy groups in that
  these abelian groups of many common spaces are easier to compute.
  Cellular cohomology is a special kind of cohomology designed for cell complexes:
  these are built in stages by attaching spheres of progressively higher dimension,
  and cellular cohomology defines the groups out of the combinatorial description
  of how spheres are attached.
  Our main result is that for finite cell complexes,
  a wide class of cohomology theories
  (including the ones defined through Eilenberg-MacLane spaces)
  can be calculated via cellular cohomology.
  This result was formalized in the Agda proof assistant.
\end{abstract}

\maketitle

\section{Introduction}
\label{sec:introduction}

\emph{Homotopy type theory} (\HoTT)~\cite{hott-as:book} is a new area exploring
the potential of type-theoretic presentations of homotopy theory,
an area originating from the study of topological spaces up to continuous transformation.
It facilitates computer checking of proofs,
and in some case results in the discovery of new proofs.
(See for example~\cite{blakers-massey-in-hott}.)
Many proofs have since been mechanized and checked by
proof assistants such as Agda~\cite{hott-in:agda}, Lean~\cite{hott-in:lean} and Coq~\cite{hott-in:coq}.
\HoTT\ sheds new light on how type theory can be understood
and leads to new type theories based on the resulting insights~%
\cite{cchm,chtt,ccctt-at:csl,chtt.part1,chtt.part2,chtt.part3,chtt.part4,cubicaltt-hits}.

One important concept in homotopy theory is homotopy groups,
groups associated with every (pointed) space
that reveal important structures of that space.
However, it is difficult to calculate higher homotopy groups
for many common spaces, for example spheres.
As a result, homology and cohomology theories arise
as an alternative way to study spaces.

The success of homotopy type theory begins with homotopy groups
partially because of their simple type-theoretic definition.
For example, the homotopy groups $\pi_n(\sphere n)$ were
calculated in~\cite{licata-brunerie-pinsn}.
Over time people have been extending the success
in homotopy and cohomology theories as well.
This paper is yet another milestone we have achieved.

\smallskip

The basic idea of \emph{cohomology theory}
is to study \emph{functions from cycles} in a space.
Functions from the cycles at a fixed dimension $n$ form a group,
which is called the $n$th cohomology group.
There are several ways to define a theory of such groups,
some combinatorial and some axiomatic;
amazingly, the classical theory states that
the two approaches are essentially equivalent.
Our goal is to recreate such an equivalence in \HoTT.

In this paper we focus on a particular class of types,
\emph{CW complexes,}
which come with an explicit description of
how the space is built by iterated attachment of disks.
We will then introduce \emph{cellular cohomology theory,}
a combinatorial cohomology theory
specifically defined for CW complexes.
After that,
we will define
\emph{ordinary \EilenbergSteenrod\ cohomology theory,}
an axiomatic framework for cohomology theory.

\smallskip

This journal paper is an expansion of the conference version~\cite{cohomology-in:hott}
with more explanations and diagrams.
A large part of the text and Agda formulation presented in the conference paper
already appeared in Hou~(Favonia)'s PhD thesis~\cite{favonia-as:thesis}.
However, one critical lemma, Lemma~\ref{lma:cochain}, was stated as a conjecture
and the main theorem was only proved after the thesis.
During the study, a critical component, \emph{degree},
was radically changed to facilitate the proving.

\section{Notation}

We assume that readers are familiar with
common type-theoretic expressions
as in \cite{hott-as:book,favonia-as:thesis}.
In particular, we work in Martin-L\"of type theory
augmented with Voevodsky's \emph{univalence axiom}
and the \emph{pushout} type constructor for
homotopy pushouts.
From these ingredients we can define
cofibers,
the propositional and higher dimensional truncations~\cite{prop-trunc,joinconstruction},
set quotients,
and other higher inductive types used in this work.
Our constructions also work in
the proposed cubical type theories~\cite{cchm,chtt,ccctt-at:csl,chtt.part1,chtt.part2,chtt.part3,chtt.part4,cubicaltt-hits}.
The following is a brief review.

\subsection{Basics}

We often use the word ``space'' as a synonym for ``type''
in order to facilitate the intuition of types as spaces.
The identification type between two elements $x, y$ of type $A$ is denoted by $x=y$.

A pointed type $X$ is equipped with a point $\point{X}$,
and a pointed arrow $f$ from a pointed type $X$ to another pointed type $Y$
is an arrow equipped with an identification from $f\parens{\point{X}}$ to $\point{Y}$.
Pointed arrow types are specially marked as $X \pto Y$
for their importance in calculating degrees~(Section~\ref{sec:degree}).

\subsection{Higher Inductive Types}

Throughout this paper we are going to employ
different higher inductive types that are
definable using only pushouts.
The first one is the \emph{suspension} of type $A$
as the following pushout:
\begin{center}
  \begin{tikzpicture}
    \matrix[column sep=10mm, row sep=10mm] {
      \node (A) {$A$}; &
      \node (ur) {$\unittype$};
      \\
      \node (bl) {$\unittype$}; &
      \node (susp) {$\susp\parens{A}$};
      \\
    };
    \draw [arrow] (A) -- (ur);
    \draw [arrow] (A) -- (bl);
    \draw [arrow] (ur) -- (susp);
    \draw [arrow] (bl) -- (susp);
    \path (A) -- (susp) node [very near end] {$\ulcorner$};
  \end{tikzpicture}
\end{center}
If $f$ is an arrow from $A$ to $B$, then
$\susp\parens{f}$ is the arrow from $\susp\parens{A}$ to $\susp\parens{B}$
arising from the functoriality of $\susp$.
The $n$th \emph{sphere} $\sphere{n}$ is defined to be the iterated suspension $\susp^n\parens{\booltype}$.

Another instance is the \emph{homotopy cofiber} of a pointed
map $f : X \pto Y$.
It is defined to be the following pushout:
\begin{center}
  \begin{tikzpicture}
    \matrix[column sep=15mm, row sep=10mm] {
      \node (X) {$X$}; &
      \node (Y) {$Y$};
      \\
      \node (bl) {$\unittype$}; &
      \node (cofiber) {$\cofiber\parens{f}$};
      \\
    };
    \draw [arrow] (X) -- (Y) node [edgelabel, above] {$f$};
    \draw [arrow] (X) -- (bl);
    \draw [arrow] (Y) -- (cofiber) node [edgelabel, right] {$\Ccodom$};
    \draw [arrow] (bl) -- (cofiber) node [edgelabel, below] {$\Cbase$};
    \path (X) -- (cofiber) node [very near end] {$\ulcorner$};
  \end{tikzpicture}
\end{center}
If the map $f$ is understood,
we write $\quotient YX := \cofiber(f)$.

Finally, given a family of pointed types $\braces{X_i}_{i:I}$ indexed by some type $I$,
the \emph{wedge} $\bigwedgetype{i:I}{X_i}$ is defined to be this pushout:
\begin{center}
  \begin{tikzpicture}
    \matrix[column sep=25mm, row sep=10mm] {
      \node (X) {$I$}; &
      \node (Y) {$\sum_{i:I} X_i$};
      \\
      \node (bl) {$\unittype$}; &
      \node (cofiber) {$\bigwedgetype{i:I}{X_i}$};
      \\
    };
    \draw [arrow] (X) -- (Y) node [edgelabel, above] {$i \mapsto \pair{i}{\point{X_i}}$};
    \draw [arrow] (X) -- (bl);
    \draw [arrow] (Y) -- (cofiber);
    \draw [arrow] (bl) -- (cofiber);
    \path (X) -- (cofiber) node [very near end] {$\ulcorner$};
  \end{tikzpicture}
\end{center}

\smallskip

The $n$th truncation of type $A$, written $\trunc{A}{n}$,
enjoys the universal property that
for any $n$-truncated type $B$,
arrows from $A$ to $B$ uniquely factor through $\trunc{A}{n}$.
Truncations are definable with (iterated) pushouts~\cite{prop-trunc,joinconstruction};
so are set quotients.

\subsection{Groups}

The \emph{quotient group} $\quotient GH$ is defined for any group $G$ and any normal subgroup $H$ of $G$.
For any pointed space $X$,
the $n$th \emph{homotopy group} $\pi_n(X)$ is defined to be the $0$-truncation of the $n$th iterated loop space of $X$.
The \emph{free abelian group} of a set $A$ is written as $\formalsum{A}$.\footnote{In the Agda formulation,
  we allowed $A$ to be an arbitrary type, but then took the set quotient to force the carrier of the resulting group to be a set.}

\section{CW complexes}
\label{sec:cw}
A CW complex (also known as a cell complex) is an inductively defined type
built by attaching cells,
starting from points, lines, faces, and so on.
The description consists of $A_n$ as the index set of cells at dimension $n$,
along with functions $\alpha_n$ denoting how cells are attached.
(We refer to \cite[Chapter~0]{hatcher-at} for a
discussion of CW complexes in a classical context.)
With this combinatorial description at hand,
one may define the cellular cohomology groups
as shown in Section~\ref{sec:cellular-cohomology}.

Many common types can be represented as CW complexes:
the unit type, the spheres, the torus,
and even the real projective spaces~\cite{realprojective}.
One can also build the two-cell complexes obtained by attaching
a cell of dimension $m+1$
to a sphere of dimension $n$
with attaching map given by
an element of $\pi_m(\sphere n)$.
A special case is that of the \emph{Moore spaces}
$M(\ZZ/q\ZZ,n)$ given by the element of $\pi_n(\sphere n)$
corresponding to $q \in \ZZ$ under the
isomorphism $\pi_n(\sphere n) \equiv \ZZ$.
It follows from this description
that $M(\ZZ/q\ZZ,n)$ is the cofiber
of the degree $q$ map $\sphere{n} \to \sphere{n}$.
The two-cell complexes are the simplest
ones that are not spheres,
and they form an important family of examples
and counter-examples,
see Section~\ref{sec:degree} below for an instance of this.

Let $X_0$ be the index set $A_0$ itself as the start
and in general $X_n$ be the construction up to dimension $n$.%
\footnote{This definition was adapted from Buchholtz's work in \Lean~\cite{cellular.complexes-in:utt-by:ulrik}.}
A cell of index $a : A_{n+1}$ at dimension $n+1$ is specified by its boundary in $X_n$,
denoted by the function $\alpha_{n+1}\pair{a}{-}$ from $\sphere{n}$ to $X_n$.
The type $X_{n+1}$ is then the result after attaching all cells at dimension $n+1$ to $X_n$.
Formally, the function $\alpha_{n+1}$ is of type $A_{n+1} \times \sphere{n} \to X_n$
describing the boundary of each cell.
Inductively, the type $X_{n+1}$ is defined to be the following pushout:
(Note that every pushout in homotopy type theory is a homotopy pushout.)
\begin{center}
  \begin{tikzpicture}
    \matrix[row sep=10mm,column sep=10mm] {
      \node (spoke) {$A_{n+1} \times \sphere{n}$};
      &
      \node (hub) {$A_{n+1}$};
      \\
      \node (old) {$X_n$};
      &
      \node (new) {$X_{n+1}$}; \\
    };
    \draw [arrow] (spoke) -- (old) node [edgelabel, left] {$\alpha_{n+1}$};
    \draw [arrow] (spoke) -- (hub) node [edgelabel, above] {$\fst$};
    \draw [arrow] (hub) -- (new);
    \draw [arrow] (old) -- (new);
    \path (spoke) -- (new) node [very near end] {$\ulcorner$};
  \end{tikzpicture}
\end{center}

In this paper, we only work with \emph{finite}
CW complexes whose building process stops at some finite dimension
and for which every $A_n$ is a finite set.
(That is, every $A_n$ is the standard $k$-element set for some $k$.)
Pictorially, a (finite) CW complex is the following iterated pushout,
starting from the type $X_0 \deq A_0$ and ending at some dimension.
\begin{center}
  \begin{tikzpicture}
    \matrix[row sep=8mm,column sep=5mm] {
      &
      \node (spoke1) {$A_{n+1} \times \sphere{n}$};
      &
      \node (hub1) {$A_{n+1}$};
      &[-2mm]
      \node (spoke2) {$A_{n+2} \times \sphere{n+1}$};
      &
      \node (hub2) {$A_{n+2}$};
      \\
      \node (Xpre) {\ldots};
      &
      \node (X0) {$X_n$};
      &&
      \node (X1) {$X_{n+1}$};
      &&
      \node (X2) {\ldots};
      \\
    };
    \draw [arrow] (spoke1) -- (X0) node [edgelabel, right] {$\alpha_{n+1}$};
    \draw [arrow] (spoke1) -- (hub1) node [edgelabel, above] {$\fst$};
    \draw [arrow] (hub1) -- (X1);
    \draw [arrow] (X0) -- (X1);
    \draw [arrow] (spoke2) -- (X1) node [edgelabel, right] {$\alpha_{n+2}$};
    \draw [arrow] (spoke2) -- (hub2) node [edgelabel, above] {$\fst$};
    \draw [arrow] (hub2) -- (X2);
    \draw [arrow] (X1) -- (X2);
    \draw [arrow] (Xpre) -- (X0);
  \end{tikzpicture}
\end{center}

A \emph{pointed} CW complex
additionally requires $A_0$ to be pointed
(and hence $X_0$ and all following pushouts).

\section{Cellular Cohomology}
\label{sec:cellular-cohomology}

Cohomology theory concerns \emph{functions from cycles,}
and one of the best ways to introduce it
is through its dual, \emph{homology theory,} which is about the cycles themselves.
With access to an explicit, combinatorial description of a type,
suitable algebraic structures can be defined for cycles.

To begin with,
a one-dimensional cycle in homology theory (and cohomology theory)
is a \emph{linear combination of oriented lines}
that at each any point has equally many incoming and outgoing lines
counted with multiplicity
(the in-degree equals the out-degree).
\begin{center}
\begingroup%
  \makeatletter%
  \providecommand\color[2][]{%
    \errmessage{(Inkscape) Color is used for the text in Inkscape, but the package 'color.sty' is not loaded}%
    \renewcommand\color[2][]{}%
  }%
  \providecommand\transparent[1]{%
    \errmessage{(Inkscape) Transparency is used (non-zero) for the text in Inkscape, but the package 'transparent.sty' is not loaded}%
    \renewcommand\transparent[1]{}%
  }%
  \providecommand\rotatebox[2]{#2}%
  \newcommand*\fsize{\dimexpr\f@size pt\relax}%
  \newcommand*\lineheight[1]{\fontsize{\fsize}{#1\fsize}\selectfont}%
  \ifx\svgwidth\undefined%
    \setlength{\unitlength}{97.92187767bp}%
    \ifx\svgscale\undefined%
      \relax%
    \else%
      \setlength{\unitlength}{\unitlength * \real{\svgscale}}%
    \fi%
  \else%
    \setlength{\unitlength}{\svgwidth}%
  \fi%
  \global\let\svgwidth\undefined%
  \global\let\svgscale\undefined%
  \makeatother%
  \begin{picture}(1,0.73047955)%
    \lineheight{1}%
    \setlength\tabcolsep{0pt}%
    \put(0,0){\includegraphics[width=\unitlength,page=1]{cycle1.pdf}}%
    \put(0.72536939,0.64543216){\color[rgb]{0,0,0}\makebox(0,0)[t]{\smash{\begin{tabular}[t]{c}$x$\end{tabular}}}}%
    \put(0.17089707,0.2822418){\color[rgb]{0,0,0}\makebox(0,0)[rt]{\smash{\begin{tabular}[t]{r}$y$\end{tabular}}}}%
    \put(0,0){\includegraphics[width=\unitlength,page=2]{cycle1.pdf}}%
    \put(0.74999491,0.02364569){\color[rgb]{0,0,0}\makebox(0,0)[t]{\smash{\begin{tabular}[t]{c}$z$\end{tabular}}}}%
    \put(0.40304124,0.52561068){\color[rgb]{0,0,0}\rotatebox{22.116769}{\makebox(0,0)[t]{\smash{\begin{tabular}[t]{c}$a$\end{tabular}}}}}%
    \put(0.37701509,0.03079594){\color[rgb]{0,0,0}\rotatebox{-13.292055}{\makebox(0,0)[t]{\smash{\begin{tabular}[t]{c}$b$\end{tabular}}}}}%
    \put(0.61991573,0.30234712){\color[rgb]{0,0,0}\makebox(0,0)[rt]{\smash{\begin{tabular}[t]{r}$c$\end{tabular}}}}%
    \put(0.82645668,0.32723941){\color[rgb]{0,0,0}\makebox(0,0)[lt]{\smash{\begin{tabular}[t]{l}$d$\end{tabular}}}}%
    \put(0,0){\includegraphics[width=\unitlength,page=3]{cycle1.pdf}}%
  \end{picture}%
\endgroup%

\end{center}
In the above diagram, for example,
$a + b + d$, $c - d$ and $- c - a - b$ are all cycles.
We identify cycles consisting of lines in different orders so that $c - d$ and $- d + c$ are the same cycle.
The intuition is to capture traversals in the space
without worrying about the sequence of lines being traversed;
in fact, lines in a cycle need not be connected.
Each coefficient in a cycle (as a linear combination)
tracks the number of occurrences of a line,
where a negative number signifies a traversal in the opposite direction.
One can define the addition, subtraction and negation on these cycles
as those on linear combinations.

Let $\tilde\boundary_1$ be the function mapping a line from $x$ to $y$ to the linear combination $x - y$,
which represents the oriented boundary of the line.
For example, in the above diagram, we have

\begin{align*}
\tilde\boundary_1\parens{a} &= x - y \\
\tilde\boundary_1\parens{b} &= y - z \\
\tilde\boundary_1\parens{c} &= z - x \\
\tilde\boundary_1\parens{d} &= z - x.
\end{align*}

The criterion of being a cycle, that is,
whether the in- and out- degrees are in balance,
reduces to
whether the summation of $\tilde\boundary_1$ of these lines is exactly zero.
If we extend the function $\tilde\boundary_1$ on lines
to linear combinations of lines as $\boundary_1$,
then \emph{a linear combination of lines is a cycle if and only if it is in the kernel of $\boundary_1$.}
Therefore, we know $a + b + d$ is a cycle because

\begin{equation*}
  \boundary_1\parens{a + b + d} = \boundary_1\parens{a} + \boundary_1\parens{b} + \boundary_1\parens{d}
  = \parens{x - y} + \parens{y - z} + \parens{z - x} = 0.
\end{equation*}

\smallskip

In the context of CW complexes introduced in Section~\ref{sec:cw},
the linear combinations of lines and points are
the free abelian groups $\formalsum{A_1}$ and $\formalsum{A_0}$,
respectively:
\begin{center}
  \begin{tikzpicture}
    \matrix[row sep=10mm,column sep=6mm] {
      &&[-7mm]
      \node (lines) {$\formalsum{A_1}$};
      &
      \node (points) {$\formalsum{A_0}$};
      \\
    };
    \draw [arrow] (lines) -- (points) node [edgelabel, above] {$\boundary_1$};
  \end{tikzpicture}
\end{center}

We would like to identify cycles up to cells at higher dimensions.
In particular, if there is a two-dimensional cell filling
the difference between two cycles,
then those two cycles should be regarded as the same.
Note that for two cycles $u$ and $v$, the difference between them, $u - v$, is still a cycle.
This means we can turn our attention to how a two-dimensional cell fills a cycle.
Intuitively, a cell fills a cycle if its boundary matches the cycle.
For instance, the cell $e$ below fills in the cycle $c - d$:

\begin{center}
\begingroup%
  \makeatletter%
  \providecommand\color[2][]{%
    \errmessage{(Inkscape) Color is used for the text in Inkscape, but the package 'color.sty' is not loaded}%
    \renewcommand\color[2][]{}%
  }%
  \providecommand\transparent[1]{%
    \errmessage{(Inkscape) Transparency is used (non-zero) for the text in Inkscape, but the package 'transparent.sty' is not loaded}%
    \renewcommand\transparent[1]{}%
  }%
  \providecommand\rotatebox[2]{#2}%
  \newcommand*\fsize{\dimexpr\f@size pt\relax}%
  \newcommand*\lineheight[1]{\fontsize{\fsize}{#1\fsize}\selectfont}%
  \ifx\svgwidth\undefined%
    \setlength{\unitlength}{97.92187767bp}%
    \ifx\svgscale\undefined%
      \relax%
    \else%
      \setlength{\unitlength}{\unitlength * \real{\svgscale}}%
    \fi%
  \else%
    \setlength{\unitlength}{\svgwidth}%
  \fi%
  \global\let\svgwidth\undefined%
  \global\let\svgscale\undefined%
  \makeatother%
  \begin{picture}(1,0.73047955)%
    \lineheight{1}%
    \setlength\tabcolsep{0pt}%
    \put(0,0){\includegraphics[width=\unitlength,page=1]{cycle2.pdf}}%
    \put(0.72536939,0.64543216){\color[rgb]{0,0,0}\makebox(0,0)[t]{\smash{\begin{tabular}[t]{c}$x$\end{tabular}}}}%
    \put(0.17089707,0.2822418){\color[rgb]{0,0,0}\makebox(0,0)[rt]{\smash{\begin{tabular}[t]{r}$y$\end{tabular}}}}%
    \put(0,0){\includegraphics[width=\unitlength,page=2]{cycle2.pdf}}%
    \put(0.74999491,0.02364569){\color[rgb]{0,0,0}\makebox(0,0)[t]{\smash{\begin{tabular}[t]{c}$z$\end{tabular}}}}%
    \put(0.40304124,0.52561068){\color[rgb]{0,0,0}\rotatebox{22.116769}{\makebox(0,0)[t]{\smash{\begin{tabular}[t]{c}$a$\end{tabular}}}}}%
    \put(0.37701509,0.03079594){\color[rgb]{0,0,0}\rotatebox{-13.292055}{\makebox(0,0)[t]{\smash{\begin{tabular}[t]{c}$b$\end{tabular}}}}}%
    \put(0.61991573,0.30234712){\color[rgb]{0,0,0}\makebox(0,0)[rt]{\smash{\begin{tabular}[t]{r}$c$\end{tabular}}}}%
    \put(0.82645668,0.32723941){\color[rgb]{0,0,0}\makebox(0,0)[lt]{\smash{\begin{tabular}[t]{l}$d$\end{tabular}}}}%
    \put(0.73015637,0.32087497){\color[rgb]{0,0,0}\makebox(0,0)[t]{\smash{\begin{tabular}[t]{c}$e$\end{tabular}}}}%
    \put(0,0){\includegraphics[width=\unitlength,page=3]{cycle2.pdf}}%
  \end{picture}%
\endgroup%

\end{center}
Similar to $\boundary_1$, we will later define the boundary function $\boundary_2$
for any two-dimensional cell indexed by $w : A_2$
by summing up lines traversed by $\alpha_2\pair{w}{-}$
where $\alpha_2$ is given in the description of the CW complex.
Here, $\boundary_2\parens{e}$ is $c - d$.
A cycle can be filled if and only if
it is in the image of $\boundary_2$.
Consider the following diagram:
\begin{center}
  \begin{tikzpicture}
    \matrix[row sep=10mm,column sep=6mm] {
      \node (2cells) {$\formalsum{A_2}$};
      &
      \node (lines) {$\formalsum{A_1}$};
      &
      \node (points) {$\formalsum{A_0}$};
      \\
    };
    \draw [arrow] (2cells) -- (lines) node [edgelabel, above] {$\boundary_2$} ;
    \draw [arrow] (lines) -- (points) node [edgelabel, above] {$\boundary_1$} ;
  \end{tikzpicture}
\end{center}
The subject of our interest, \emph{cycles up to identifications,}
is exactly the quotient of cycles (the kernel of $\boundary_1$)
by boundaries of cells at the next dimension (the image of $\boundary_2$).
Concretely, in the above example,
the kernel of $\boundary_1$ is generated by the cycles $a + b + c$ and $c - d$,
the image of $\boundary_2$ is generated by the cycle $\boundary_2\parens{e} = c - d$,
and thus the cycles up to identifications constitute
the cyclic group generated by the equivalence class $\brackets{a + b + c}$.
This quotient is called the \emph{first cellular homology group} of type $X$
(with integer coefficients),
and groups for higher dimensions can be defined in a similar way.
How exactly the boundary functions $\boundary_n$ at higher dimensions
should be defined from $A_n$ and $\alpha_n$
will be discussed later.

\phantomsection\label{p:cellular.dual}
The sequence formed by the free abelian groups $\ZZ[A_n]$
and boundary maps $\boundary_n$ is a \emph{chain complex}.
Cellular \emph{co}\/homology takes the dual of the sequence
before calculating the quotients of kernels by images;
it applies the contravariant functor $\hom\parens{-,G}$
for some given abelian group $G$ to the entire sequence.
The dualized sequence is a \emph{cochain complex.}%
\footnote{The kernel-image quotienting is called \emph{homology}
  in the classical literature,
  so the homology of a chain complex is homology,
  and the homology of a cochain complex is cohomology.}
The resulting diagram is
\begin{center}
  \begin{tikzpicture}
    \matrix[column sep=20mm] {
      \node (2cells) {$\hom\parens[\big]{\formalsum{A_2},G}$};
      &
      \node (lines) {$\hom\parens[\big]{\formalsum{A_1},G}$};
      &
      \node (points) {$\hom\parens[\big]{\formalsum{A_0},G}$};
      \\
    };
    \draw [arrow,<-] (2cells) -- (lines) node [edgelabel, above=2mm] {$\hom\parens{\boundary_2,G}$} ;
    \draw [arrow,<-] (lines) -- (points) node [edgelabel, above=2mm] {$\hom\parens{\boundary_1,G}$} ;
  \end{tikzpicture}
\end{center}
and the \emph{first \emph{co}\/homology group}, denoted $H^1(X; G)$, is the quotient
of the kernel of $\hom\parens{\boundary_2,G}$
by the image of $\hom\parens{\boundary_1,G}$.
Groups at higher dimensions are defined in a similar way;
we write $H^n(X; G)$ for the $n$th cellular cohomology group
with coefficients in $G$.

One reasonable definition of $\boundary_{n+1}$ on a cell $b$ at dimension $n+1$
is to individually calculate the coefficient $\coeff(b,a) : \ZZ$ of each cell $a$
within the boundary of the cell $b$;
that is, the boundary function is of the following form
(where the $\sum$ below is the summation in linear algebra, not sum types):
\begin{equation*}
  \boundary_{n+1}(b) \deq \sum_{a : A_n} \coeff(b,a)\;a.
\end{equation*}
If $A_n$ could be infinite
(which is impossible in our current work but possible in future generalization),
in order to make sense of the summation,
it seems we have to assume $\coeff(b,-)$ always has finite support;
this corresponds to the \emph{closure-finiteness} condition in the classical theory,
which is part of the definition of CW complexes and
in fact what the \enquote{C} in the \enquote{CW} stands for.
The classical condition says the boundary of each cell
should be covered by a finite union of cells at lower dimensions,
and so our assumption is well-motivated and may be necessary.

\begin{figure*}
  \centering
  \begin{tikzpicture}
    \matrix[row sep=10mm,column sep=25mm] {
      \node (b) {\input{drawings/boundary0.pdf_tex}};
      &
      \node (Xn) [yshift=1mm] {\input{drawings/boundary1.pdf_tex}};
      &[-15mm]
      \node (Xn/n-1) [yshift=2mm] {\input{drawings/boundary2.pdf_tex}};
      &
      \node (a) [yshift=4mm] {\input{drawings/boundary3.pdf_tex}};
      \\
    };
    \draw [arrow] (b.east|-b.center) -- (Xn.west|-b.center) node [edgelabel, above] (labelone) {$\alpha_{n+1}\pair{b}{-}$} ;
    \draw [arrow] (Xn.east|-b.center) -- (Xn/n-1.west|-b.center) ;
    \draw [arrow] (Xn/n-1.east|-b.center) -- (a.west|-b.center) node [edgelabel, above] (labelthree) {$\proj_{a}$} ;
  \end{tikzpicture}
  \caption{The function used to define $\coeff(b,a)$.}
  \label{fig:degree-function}
\end{figure*}

Intuitively, the value $\coeff(b,a)$ should capture
the number of (signed) occurrences of $a$
in the boundary of $b$. Considering the CW complex
\begin{center}
\begingroup%
  \makeatletter%
  \providecommand\color[2][]{%
    \errmessage{(Inkscape) Color is used for the text in Inkscape, but the package 'color.sty' is not loaded}%
    \renewcommand\color[2][]{}%
  }%
  \providecommand\transparent[1]{%
    \errmessage{(Inkscape) Transparency is used (non-zero) for the text in Inkscape, but the package 'transparent.sty' is not loaded}%
    \renewcommand\transparent[1]{}%
  }%
  \providecommand\rotatebox[2]{#2}%
  \newcommand*\fsize{\dimexpr\f@size pt\relax}%
  \newcommand*\lineheight[1]{\fontsize{\fsize}{#1\fsize}\selectfont}%
  \ifx\svgwidth\undefined%
    \setlength{\unitlength}{70.05931147bp}%
    \ifx\svgscale\undefined%
      \relax%
    \else%
      \setlength{\unitlength}{\unitlength * \real{\svgscale}}%
    \fi%
  \else%
    \setlength{\unitlength}{\svgwidth}%
  \fi%
  \global\let\svgwidth\undefined%
  \global\let\svgscale\undefined%
  \makeatother%
  \begin{picture}(1,0.63475781)%
    \lineheight{1}%
    \setlength\tabcolsep{0pt}%
    \put(0,0){\includegraphics[width=\unitlength,page=1]{boundary-full.pdf}}%
    \put(0.30053833,0.51588711){\color[rgb]{0,0,0}\makebox(0,0)[t]{\lineheight{0}\smash{\begin{tabular}[t]{c}$a$\end{tabular}}}}%
    \put(0.28993,0.2531835){\color[rgb]{0,0,0}\makebox(0,0)[t]{\lineheight{0}\smash{\begin{tabular}[t]{c}$b$\end{tabular}}}}%
  \end{picture}%
\endgroup%

\end{center}
with a two-cell $b$ with boundary consisting of the line $a$ and two other lines,
the value $\coeff(b,a)$ should be $1$ under suitable orientation.
How should we define $\coeff(b,a)$?
The trick is to identify all points and obtain a rose
\begin{center}
\begingroup%
  \makeatletter%
  \providecommand\color[2][]{%
    \errmessage{(Inkscape) Color is used for the text in Inkscape, but the package 'color.sty' is not loaded}%
    \renewcommand\color[2][]{}%
  }%
  \providecommand\transparent[1]{%
    \errmessage{(Inkscape) Transparency is used (non-zero) for the text in Inkscape, but the package 'transparent.sty' is not loaded}%
    \renewcommand\transparent[1]{}%
  }%
  \providecommand\rotatebox[2]{#2}%
  \ifx\svgwidth\undefined%
    \setlength{\unitlength}{36.93745586bp}%
    \ifx\svgscale\undefined%
      \relax%
    \else%
      \setlength{\unitlength}{\unitlength * \real{\svgscale}}%
    \fi%
  \else%
    \setlength{\unitlength}{\svgwidth}%
  \fi%
  \global\let\svgwidth\undefined%
  \global\let\svgscale\undefined%
  \makeatother%
  \begin{picture}(1,1.31889913)%
    \put(0,0){\includegraphics[width=\unitlength,page=1]{boundary2.pdf}}%
    \put(0.60546519,1.07207142){\color[rgb]{0,0,0}\makebox(0,0)[b]{\smash{$a$}}}%
  \end{picture}%
\endgroup%

\end{center}
where the boundary of $b$ is now composed of loops at the center;
the winding number of the loop $a$ is then the coefficient we are looking for.
More precisely, the coefficient is the winding number of the function
depicted in Figure~\ref{fig:degree-function}, or more formally,
\[
  \begin{tikzcd}[column sep=12mm]
    \sphere{1} \ar[r,"\alpha_{2}\pair{b}{-}"] &
    X_1 \ar[r,"\Ccodom_{\phantom{1}}"] &
    \quotient{X_1}{X_{0}} \simeq \bigwedgetype{\dummy:A_1}{\sphere{1}}
    \ar[r,"\proj_{a}"] & \sphere{1}
  \end{tikzcd}
\]
where $\quotient{X_{1}}{X_{0}}$ is the cofiber of the inclusion from $X_{0}$ to $X_1$
and $\proj_{a}$ kills every loop except the one indexed by $a$.
The projection function $\proj_{a}$ is definable whenever
the index type $A_1$ has decidable equality.
For arbitrary dimension $n \geq 1$,
we follow the same pattern to obtain a function from $\sphere{n}$ to $\sphere{n}$,
\[
  \begin{tikzcd}[column sep=15mm]
    \sphere{n}\ar[r,"\alpha_{n+1}\pair{b}{-}"] &
    X_n\ar[r,"\Ccodom_{\phantom{n}}"] &
    \quotient{X_n}{X_{n-1}} \simeq \bigwedgetype{\dummy:A_n}{\sphere{n}}
    \ar[r,"\proj_{a}"]
    & \sphere{n}
  \end{tikzcd}
\]
and then inspect its generalized winding number, namely its \emph{degree}
given by the function $\deg : (\sphere n \to \sphere n) \to \ZZ$
that will be defined in Section~\ref{sec:degree}.
This critically relies on the following lemma which will be used repeatedly in our work:
\begin{lem}
  \label{lma:quot-as-wedge}
  For $n \geq 1$, the types $\quotient{X_n}{X_{n-1}}$ and $\bigwedgetype{\dummy:A_n}{\sphere{n}}$ are equivalent.
\end{lem}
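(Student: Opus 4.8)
The plan is to realize both types as iterated homotopy pushouts and to connect them by two applications of the pushout pasting lemma (a pushout stacked on a pushout is again a pushout), so that essentially no explicit path algebra is required. Throughout, the hypothesis $n \geq 1$ is exactly what guarantees that $\sphere{n-1}$ and $X_{n-1}$ are available.

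First I would recall that, by definition, $X_n$ sits in the pushout square with top edge $\fst : A_n \times \sphere{n-1} \to A_n$, left edge $\alpha_n$, and bottom edge the inclusion $i : X_{n-1} \to X_n$, and that $\quotient{X_n}{X_{n-1}} \deq \cofiber(i)$ is the pushout of $\unittype \leftarrow X_{n-1} \xrightarrow{i} X_n$. Stacking the cofiber square beneath the defining square of $X_n$ and applying the pasting lemma collapses $X_{n-1}$ and $X_n$ out of the picture: the outer rectangle exhibits $\cofiber(i)$ as the pushout of $\unittype \leftarrow A_n \times \sphere{n-1} \xrightarrow{\fst} A_n$, so that $\cofiber(i) \equiv \cofiber(\fst)$. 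Intuitively this is the standard fact that collapsing $X_{n-1}$ leaves exactly the freshly attached $n$-cells, now joined at the collapsed point.

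It then remains to prove $\cofiber(\fst : A_n \times \sphere{n-1} \to A_n) \equiv \bigwedgetype{\dummy:A_n}{\sphere{n}}$. Here I would use that $\sphere{n} = \susp(\sphere{n-1})$ is the pushout $\unittype \leftarrow \sphere{n-1} \to \unittype$ and that the functor $A_n \times (-)$ preserves pushouts, so that $A_n \times \sphere{n}$ is the \emph{fiberwise suspension}: the pushout $\pushouttype{A_n \times \sphere{n-1}}{A_n}{A_n}$ of $A_n \xleftarrow{\fst} A_n \times \sphere{n-1} \xrightarrow{\fst} A_n$, with the two copies of $A_n$ included as the north- and south-pole sections. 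Since the wedge $\bigwedgetype{\dummy:A_n}{\sphere{n}}$ is by definition $\cofiber$ of the basepoint section $a \mapsto \pair{a}{\point{\sphere{n}}} : A_n \to A_n \times \sphere{n}$ (using $\sum_{\dummy:A_n}\sphere{n} \equiv A_n \times \sphere{n}$), I would take the basepoint to be one of the poles and stack its collapsing square beneath the fiberwise-suspension square. The pasting lemma again removes $A_n \times \sphere{n}$ and the other pole, leaving precisely the pushout of $\unittype \leftarrow A_n \times \sphere{n-1} \xrightarrow{\fst} A_n$, that is, $\cofiber(\fst)$. Chaining the two equivalences proves the lemma.

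The main obstacle I anticipate is the fiberwise-suspension step: establishing $A_n \times \susp(\sphere{n-1}) \equiv \pushouttype{A_n \times \sphere{n-1}}{A_n}{A_n}$ together with the identification of the wedge's basepoint section with a pole section. While \enquote{products preserve pushouts} is conceptually immediate, in a formal development it must be produced as an explicit equivalence (e.g.\ via the flattening lemma), and one must track carefully that the section $a \mapsto \pair{a}{\point{\sphere{n}}}$ used to define the wedge lands on exactly the pole that the second pasting step collapses. The two pasting steps and this distributivity equivalence are the only places any real work occurs; everything else is definitional unfolding.
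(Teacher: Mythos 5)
Your proof is correct and takes essentially the same route as the paper's: the paper proves this lemma by exactly your first pasting argument, stacking the cofiber square under the cell-attachment square and reading the outer rectangle as the pushout of $\unittype \leftarrow A_n \times \sphere{n-1} \xrightarrow{\fst} A_n$. The only difference is that the paper dismisses the identification of that outer pushout with $\bigwedgetype{\dummy:A_n}{\sphere{n}}$ as a \enquote{direct calculation}, whereas you make it explicit via a second pasting through the fiberwise suspension $A_n \times \susp\parens{\sphere{n-1}}$ --- a worthwhile elaboration, since that is precisely where the basepoint bookkeeping you flag actually lives.
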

\begin{proof}
  This follows from a direct calculation. Another way to look at it is to recognize the two readings of the following square consisting of two pushout squares. The lower square gives $P \simeq \quotient{X_n}{X_{n-1}}$ and the whole square (which is also a pushout square) gives $P \simeq \bigwedgetype{\dummy:A_n}{\sphere{n}}$, and thus they must be equivalent.
  \begin{center}
    \begin{tikzpicture}
      \matrix[row sep=10mm,column sep=10mm] {
        \node (spoke) {$A_{n} \times \sphere{n-1}$};
        &
        \node (hub) {$A_{n}$};
        \\
        \node (old) {$X_{n-1}$};
        &
        \node (new) {$X_{n}$};
        \\
        \node (one) {$\unittype$};
        &
        \node (quot) {$P$};
        \\
      };
      \draw [arrow] (spoke) -- (old) node [edgelabel, left] {$\alpha_{n}$};
      \draw [arrow] (spoke) -- (hub) node [edgelabel, above] {$\fst$};
      \draw [arrow] (hub) -- (new);
      \draw [arrow] (old) -- (one);
      \draw [arrow] (old) -- (new);
      \draw [arrow] (new) -- (quot);
      \draw [arrow] (one) -- (quot);
      \path (spoke) -- (new) node [very near end] {$\ulcorner$};
      \path (old) -- (quot) node [very near end] {$\ulcorner$};
    \end{tikzpicture}
  \end{center}
  The intuition is that, if we shrink $X_{n-1}$ into a point,
  the remaining of $X_n$ is the attachment of $A_n$ many cells at the $n$th dimension,
  which results into a wedge of $\sphere{n}$ indexed by $A_n$.
\end{proof}
\noindent
The coefficient is then defined as
\begin{equation*}
  \coeff(b,a) \deq \deg\parens[\big]{\proj_{a} \fcomp e\fcomp \Ccodom\fcomp \lam{x} \alpha_{n+1}\pair{b}{x}}
\end{equation*}
where $e$ is the equivalence between $\quotient{X_n}{X_{n-1}}$ and $\bigwedgetype{\dummy:A_n}{\sphere{n}}$.
This finishes our definition of boundary functions in \HoTT.


The cellular cohomology theory that is relevant in our paper
is the \emph{reduced} cellular cohomology theory for pointed types.
A characteristic difference is that a reduced theory
will assign the trivial group instead of the group $\ZZ$ as the zeroth cohomology group of the unit type.
It is arguably more stylish to have trivial groups for the unit type, the most trivial pointed type.
To achieve this,
the \emph{reduced} homology theory augments the chain complex with
$\augmentation$ to $\ZZ$ to give
\[
  \begin{tikzcd}
    \cdots\ar[r] &
    \formalsum{A_1}\ar[r,"\boundary_1"] &
    \formalsum{A_0}\ar[r,"\augmentation"] &
    \ZZ
  \end{tikzcd}
\]
where $\augmentation$ sums up integer coefficients in $\formalsum{A_0}$;
its dual,\phantomsection\label{p:cellular.seq}
\[
  \begin{tikzcd}[column sep=15mm,every label/.append style={inner sep=1.5ex}]
    \cdots &
    \hom\parens[\big]{\formalsum{A_1},G}\ar[l] &[+3mm]
    \hom\parens[\big]{\formalsum{A_0},G}
    \ar[l,"\hom\parens{\boundary_1,G}"'] &[+3mm]
    \hom\parens{\ZZ,G}\ar[l,"\hom\parens{\augmentation,G}"']
  \end{tikzcd}
\]
leads to reduced cohomology groups.
The ending $\ZZ$ effectively kills one degree of freedom in $\formalsum{A_0}$.
In general, the reduced and unreduced integral homology groups
only differ by a $\ZZ$ at the zeroth dimension.

\section{Degrees}
\label{sec:degree}

The notion of the degree of an endo-arrow of spheres
is the key tool for defining and computing with cellular cohomology
(and homology), as shown in the previous section.
We therefore need a definition of degrees in \HoTT.

The requirements of the definition
are that we get for each $n : \NN$
a function $\deg : (\sphere n \to \sphere n) \to \ZZ$
satisfying at least the following properties:
\begin{enumerate}
\item\label{it:deg-id}
  $\deg(\mathrm{id}) = 1$,
  where $\mathrm{id} : \sphere n \to \sphere n$
  is the identity function,
\item\label{it:deg-comp}
  $\deg(g \circ f) = \deg(g)\deg(f)$,
  where $g\circ f$ is the composition of $f$ and $g$,
\item\label{it:deg-susp}
  $\deg(\susp(f)) = \deg(f)$,
  where $\susp(f) : \sphere{n+1} \to \sphere{n+1}$
  is the suspension of $f : \sphere n \to \sphere n$.
\item\label{it:deg-sone}
  $\deg(f)$ is the usual winding number of $f$,
  for $f : \SOne \to \SOne$.
\end{enumerate}
These requirements suffice to uniquely determine the degree function,
although we have not formalized this fact.

One classical definition of degrees relies on homology theories,
which are only now becoming available in \HoTT~\cite{homology-in-hott}.
The cellular homology theory for a CW complex
can be defined by skipping the dualization on page~\ref{p:cellular.dual},
and a more abstract version (in the style described in Section~\ref{sec:eilenberg-steenrod}) can be obtained
by heavily using properties of the smash product,
but in either case it seems to be an unnecessary complication
just for the definition of degrees.

Another approach
is to apply the homotopy group functor $\pi_n$ to the function in question.
Hou~(Favonia)'s thesis~\cite{favonia-as:thesis} employed this.
However, for our purposes where we need good control
on the interaction between the degree and a cohomology theory,
we have found it expedient to use yet another approach.
More precisely, it seems difficult to write down a generator
of the cyclic group $\pi_n\parens{\sphere{n}} \deq \trunc{\Omega_n\parens{\sphere{n}}}{0} \equiv \ZZ$
and thus challenging to relate the group homomorphisms showing up during the proving.

The first observation is that by property~(\ref{it:deg-susp})
we can always reduce the computation of degree
to the case of a pointed map
because $\susp\parens{f}$ is always pointed.
Another manifestation of this is
that the canonical map from the set of pointed endo-arrows of the sphere
$\trunc[\big]{\sphere{n} \pto \sphere{n}}{0}$
to the set of endo-arrows
$\trunc[\big]{\sphere{n} \to \sphere{n}}{0}$
given by forgetting the point
is an equivalence for $n\ge1$.
This means pointedness is free.
For $n=0$, it is not the case:
the identity on $\sphere0$ ($=\booltype$) has degree $1$,
the map that swaps the points has degree $-1$,
while the remaining two maps have degree $0$,
because they suspend to maps of the circle
with winding number zero.

The second idea is to observe
that the degree function is for each $n \geq 1$
a bijection on connected components.
Since $\ZZ$ is a group,
this suggests that we should be able to endow
the type of pointed endo-arrows of $\sphere n$
with a natural group structure
(up to homotopy).
Thus, we want to define a group-like H-space structure
on $\sphere n \pto \sphere n$,
such that the degree map becomes a group isomorphism.
Compared to the previous approach based on $\pi_n$,
where we had difficulty writing down a generator of $\pi_n\parens{\sphere{n}}$,
a generator here is simply the identity function on $\sphere{n}$.

A \emph{co-H-space} is the notion dual to that of an H-space.
This is a natural notion in homotopy theory,
see~\cite{arkowitz} for a survey in the classical setting.
\begin{defi}
  A \emph{co-H-space} is a pointed type $A$
  together with a \emph{comultiplication}
  map $\sigma : A \pto \wedgetype AA$,
  and witnesses for the left and right
  \emph{counit} laws,
  stating that $\pi_1 \circ \sigma$ and
  $\pi_2 \circ \sigma$ are homotopic
  as pointed maps to $\mathrm{id} : A \pto A$,
  where $\pi_i : \wedgetype AA \to A$
  denotes the projection on the left or right summand,
  for $i=1,2$ respectively.
\end{defi}
This definition ensures that the type of pointed maps
from a co-H-space to any pointed type is an H-space.
\begin{defi}
  A \emph{cogroup} is a (homotopy) associative co-H-space
  together with a
  left inverse map $\lambda : A \pto A$
  such that $\nabla \circ (\lambda \vee \mathrm{id}) \circ \sigma$
  is homotopic to the constant pointed map,
  where $\nabla : A \vee A \pto A$ is the folding map,
  which is the identity on each summand.
\end{defi}
Our main interest in cogroups stems from the following fact.
\begin{prop}
  Any suspension $\susp(A)$ has the structure of a
  cogroup with comultiplication
  given by the pinch map
  $\sigma : \susp(A) \to \wedgetype{\susp(A)}{\susp(A)}$.
\end{prop}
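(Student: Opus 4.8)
The plan is to exhibit all of the cogroup data by hand and to reduce every law to a path computation via suspension induction, so that each goal splits into a north case, a south case, and a meridian case. I treat $\susp(A)$ as pointed at $\Snorth$ and write $m \deq \Smerid(\point{A})$ for the meridian singled out by the basepoint of $A$; this chosen path $\Snorth = \Ssouth$ is precisely what will let the counit homotopy be defined at the south pole. First I would define the pinch map by suspension recursion, putting $\sigma(\Snorth) \deq \Winl(\Snorth)$ (the wedge basepoint), $\sigma(\Ssouth) \deq \Winr(\Ssouth)$, and
\[
  \sigma(\Smerid(a)) \deq \ap{\Winl}(\Smerid(a)) \concatsym \inverse{\ap{\Winl}(m)} \concatsym \Wglue \concatsym \ap{\Winr}(\Smerid(a)),
\]
a path from $\Winl(\Snorth)$ to $\Winr(\Ssouth)$ that runs up the left copy, back down the distinguished meridian, across the glue, and up the right copy. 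Every subsequent construction is a map or homotopy out of $\susp(A)$, built by suspension induction; the recursion rule rewrites $\ap{\sigma}(\Smerid(a))$ to the displayed concatenation, turning each law into algebra on paths.

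For the co-H-space structure I would prove $\pi_i \fcomp \sigma \sim \mathrm{id}$ as pointed maps. Projecting the displayed path gives $\ap{\pi_1}(\sigma(\Smerid(a))) = \concat{\Smerid(a)}{\inverse{m}}$ and $\ap{\pi_2}(\sigma(\Smerid(a))) = \Smerid(a)$, because $\pi_1$ collapses the right copy and the glue while $\pi_2$ collapses the left copy and the glue. The homotopy $\pi_1 \fcomp \sigma \sim \mathrm{id}$ is then given by suspension induction with value $\refl{\Snorth}$ at the north pole and value $m$ at the south pole, the meridian case being the cancellation $\Smerid(a) \concatsym \inverse{m} \concatsym m = \Smerid(a)$; the homotopy for $\pi_2$ is given by reflexivity at both poles and is immediate. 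This is exactly the step that consumes the basepoint meridian $m$: without a chosen path $\Snorth = \Ssouth$ the south value of the left counit homotopy could not even be stated, which is why the construction wants $A$ pointed.

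It remains to give coassociativity and a co-inverse. For associativity I would compare $(\sigma \vee \mathrm{id}) \fcomp \sigma$ with $(\mathrm{id} \vee \sigma) \fcomp \sigma$ as pointed maps into the triple wedge $\susp(A) \vee \susp(A) \vee \susp(A)$; both are computed by suspension induction, and the meridian case becomes an identity between two differently associated concatenations that I would match by reassociation together with the naturality of $\Wglue$. For the co-inverse I would define $\lambda : \susp(A) \pto \susp(A)$ by suspension recursion fixing the poles and reversing orientation, $\lambda(\Smerid(a)) \deq \concat{m}{\inverse{\Smerid(a)}} \concatsym m$, and then verify that $\nabla \fcomp (\lambda \vee \mathrm{id}) \fcomp \sigma$ is pointedly null-homotopic, the meridian case reducing to the cancellation of a meridian against its reverse. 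As a conceptual check that all of this must work, the suspension--loop-space correspondence identifies pointed maps out of $\susp(A)$ with pointed maps into a loop space $\Omega Y$, which is a group under concatenation; hence $\susp(A)$ corepresents a group-valued functor and the pinch map is forced to be the comultiplication dual to concatenation.

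I expect the main obstacle to be the meridian coherence for coassociativity. Unlike the counit cancellations, which are one-dimensional, it is a two-dimensional reassociation that has to track the glue path and the transport in the identity family $x \mapsto ((\sigma \vee \mathrm{id}) \fcomp \sigma)(x) = ((\mathrm{id} \vee \sigma) \fcomp \sigma)(x)$ across the meridian; and keeping every homotopy pointed, so that it is compatible both with the wedge basepoint and with the distinguished meridian $m$, is the delicate bookkeeping that a mechanized proof must get exactly right.
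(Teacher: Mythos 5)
Your proposal is correct, but there is nothing in the paper to compare it against: for this proposition the paper says only that the proof is easy and omits it. What you write out is the standard construction (essentially what underlies the cited \texttt{HoTT-Agda} formalization), and the computations you do carry out are right: the pinch map with the basepoint meridian $m \deq \Smerid(\point{A})$ spliced in, the projections $\ap{\pi_1}(\sigma(\Smerid(a))) = \concat{\Smerid(a)}{\inverse{m}}$ and $\ap{\pi_2}(\sigma(\Smerid(a))) = \Smerid(a)$, the counit homotopy with poles $\refl{\Snorth}$ and $m$, and the inverse $\lambda$, whose associated loop $\concat{\ap{\lambda}(\Smerid(a))}{\inverse{m}}$ is indeed inverse to $\concat{\Smerid(a)}{\inverse{m}}$. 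Two remarks. First, your insistence that $A$ be pointed is not pedantry but a genuine sharpening of the paper's loose phrase \enquote{any suspension}: for $A \deq \emptytype$ one has $\susp(\emptytype) = \booltype$, which admits no co-H-space structure at all, since the counit laws would demand a point $x$ of $\wedgetype{\booltype}{\booltype}$ with $\pi_1(x) = \Ssouth$ and $\pi_2(x) = \Ssouth$, and a short pushout induction shows no such point exists; in the paper's applications the suspended type is always $\booltype$ or a sphere, so a basepoint is available. Second, keep your closing loop-space argument strictly as a heuristic: the paper runs that reasoning in the opposite direction, using the cogroup structure on $\susp(A)$ to refine the loop--suspension adjunction $(\susp(X) \pto Y) \equiv (X \pto \Omega Y)$ to a group isomorphism, so invoking the refined adjunction here would be circular; the unrefined adjunction plus a Yoneda-style argument in the homotopy category does give an independent proof, but identifying the comultiplication it produces with the concrete pinch map is again a suspension-induction computation. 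The pieces you leave open---the two-dimensional meridian coherence for coassociativity, the wedge-associativity equivalence you quietly insert to speak of a triple wedge, and the pointedness bookkeeping on every homotopy---are real but routine, and they are exactly what the paper waves away as easy.
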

The proof is easy and we omit it here.
We note that there are cogroups that are not suspensions,
for example the two-cell complex corresponding to
an element of order $3$ in $\pi_{34}(\sphere 5)$,
see~\cite{berstein-harper}.

The next fact is the main result we need.
\begin{prop}
  If $X$ is a cogroup and $Y$ is a pointed type,
  then the set $\trunc{X \pto Y}0$ has a natural group structure.
\end{prop}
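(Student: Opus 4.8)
The plan is to transport the cogroup structure on $X$ through the universal property of the wedge, equip $X \pto Y$ with a group-like H-space structure, and then pass to $\trunc{X \pto Y}{0}$, where the higher coherence data collapses to the plain group axioms. The key fact I would use is that the wedge is the coproduct in pointed types, so precomposition with the inclusions yields an equivalence $(\wedgetype XX \pto Y) \simeq (X \pto Y) \times (X \pto Y)$. Writing $\langle f,g\rangle : \wedgetype XX \pto Y$ for the copairing of $f,g : X \pto Y$, one has $\langle f,g\rangle = \nabla_Y \circ (f \vee g)$ with $\nabla_Y$ the fold map on $Y$. I would then define the multiplication by $f \ast g \deq \langle f,g\rangle \circ \sigma = \nabla_Y \circ (f \vee g) \circ \sigma$, take the unit to be the constant pointed map $\mathrm{const}$, and define the inverse of $f$ to be $f \circ \lambda$, where $\lambda$ is the cogroup inverse.

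Second, I would check that all of this descends to the truncation. Since $\trunc{-}{0}$ is functorial and $\trunc{A \times B}{0} \simeq \trunc{A}{0} \times \trunc{B}{0}$, the operation $\ast$, being an honest function $(X \pto Y)^2 \to (X \pto Y)$, induces a binary operation on $\trunc{X \pto Y}{0}$, and the unit and inverse descend likewise. The carrier $\trunc{X \pto Y}{0}$ is a set by construction, so each group axiom is an equality between elements of a set, hence a proposition. Consequently it suffices, after truncation induction on representatives $f,g,h : X \pto Y$, to exhibit a pointed homotopy witnessing each axiom; such a homotopy yields the required equality in the truncation, and propositionality means we never have to supply any coherence between these homotopies.

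Third, each axiom is obtained by translating the corresponding cogroup law across the copairing, using functoriality of $\vee$ together with naturality of the fold and pinch maps. For the unit laws, $\langle f, \mathrm{const}\rangle = f \circ \pi_1$, so $f \ast \mathrm{const} = f \circ (\pi_1 \circ \sigma) \simeq f$ by the left counit law, and symmetrically $\mathrm{const} \ast f \simeq f$ by the right counit law. For the left inverse law, functoriality of $\vee$ gives $(f \circ \lambda) \vee f = (f \vee f) \circ (\lambda \vee \mathrm{id})$ and naturality of folding gives $\nabla_Y \circ (f \vee f) = f \circ \nabla_X$, so $(f \circ \lambda) \ast f = f \circ \bigl(\nabla_X \circ (\lambda \vee \mathrm{id}) \circ \sigma\bigr) \simeq f \circ \mathrm{const} = \mathrm{const}$ by the cogroup inverse condition. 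Since a two-sided unit, associativity, and a left inverse for every element already force a group, the right inverse law needs no separate argument.

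The main obstacle is associativity. Here I must identify $(f \ast g) \ast h$ with $f \ast (g \ast h)$ by pushing both through the triple copairing $\langle f,g,h\rangle : \wedgetype{(\wedgetype XX)}{X} \pto Y$ and invoking the homotopy associativity of $\sigma$, i.e.\ $(\sigma \vee \mathrm{id}) \circ \sigma \simeq (\mathrm{id} \vee \sigma) \circ \sigma$ modulo the associativity equivalence of the wedge. Getting the copairing to interact correctly with this reassociation --- tracking the universal property on triple wedges and the naturality squares for $f \vee g$ --- is the genuinely fiddly part. What rescues the argument is precisely that associativity is only demanded as an equality in the set $\trunc{X \pto Y}{0}$: we need the two composites to be merely pointed-homotopic, with no coherence attached, so the single homotopy provided by the co-H-space associativity suffices once it has been transported across the copairing.
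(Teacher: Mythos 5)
Your proof is correct, and it matches the route the paper takes implicitly: the paper states this proposition without proof, relying on the standard fact (noted just before the cogroup definition) that pointed maps out of a co-H-space form an H-space, with the cogroup axioms supplying associativity and inverses once one passes to $\trunc{X \pto Y}{0}$, where every group axiom is a proposition and so only mere pointed homotopies are needed. Your details---multiplication $\nabla_Y \circ (f \vee g) \circ \sigma$, unit the constant map, inverse $f \circ \lambda$, descent along functoriality and product-preservation of $0$-truncation, and the observation that a two-sided unit, associativity, and left inverses already force a group---are precisely the standard argument, so there is nothing missing.
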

For example, the Moore space $M(\ZZ/q\ZZ,n)$ is a suspension
when $n \ge 2$,
allowing us to define for a pointed type $X$
the \emph{$n$th homotopy group of $X$ with coefficients in $\ZZ/q\ZZ$}
as $\pi_n(X;\ZZ/q\ZZ) \deq \trunc{M(\ZZ/q\ZZ,n) \pto X}{0}$.
(We shall not need these groups for our present purposes.)

Having thus equipped each type $\sphere n \pto \sphere n$ for $n\ge1$
with the structure of a group up to homotopy,
it remains to actually define the degree functions.

For $n=1$ we must choose the winding number per
property~(\ref{it:deg-sone}).
Since we want property~(\ref{it:deg-susp}) to hold,
we use the fact that the action of suspension on pointed maps,
$(X \pto Y) \to (\susp(X) \pto \susp(Y))$
is a group homomorphism whenever $X$ is a cogroup.
So it remains to verify that this map is an group isomorphism
in the case of endo-arrows on $\sphere n$, for $n\ge 1$.
This follows from the Freudenthal suspension theorem~%
\cite[Theorem~8.6.4]{hott-as:book},
in the same way that we prove that $\pi_n(\sphere n) \equiv \ZZ$.
In fact, the loop-suspension adjunction refines to a
group isomorphism $(\susp(X) \pto Y) \equiv (X \pto \Omega Y)$,
using the cogroup structure on $\susp(X)$
and the (H-)group structure on $\Omega Y$.

\section{\EilenbergSteenrod\ Cohomology}
\label{sec:eilenberg-steenrod}

Unlike the above explicit construction,
there is also an \emph{axiomatic} framework for cohomology.
Participants to the special year at Institute for Advanced Study
have brought into \HoTT\ the standard abstract framework
for cohomology theories---\EilenbergSteenrod\ axioms~%
\cite{cohomology-in:utt-by:mike,mayer.vietoris-in:utt,eilenberg.steenrod.cohomology.axioms}.
An (ordinary)\footnote{A cohomology theory is \emph{ordinary}
if it satisfies the last \EilenbergSteenrod\ axiom. See below.}
reduced cohomology theory in \HoTT\ may be defined
as a contravariant functor $h$ from pointed types to sequences of abelian groups
(indexed by integers $\ZZ$) satisfying the axioms below.

Before presenting these axioms, however, we need to define what it means to
\emph{satisfy set-level axiom of choice},
a condition stating that $\prod$ quantifiers and $0$-truncation commute.
This will be used in one of the cohomology axioms shown later.
\begin{defi}[set-level axiom of choice]
  A type $A$ \emph{satisfies the set-level axiom of choice} if,
  for any family of types $B$ indexed by $A$,
  the canonical function from
    $\trunc[\Big]{\prod_{a:A}B(a)}{0}$
  to
    $\prod_{a:A}\trunc[\big]{B(a)}{0}$
  is an equivalence.
\end{defi}
See \cite{hott-as:book} for more discussion about the axiom of choice
and \cite{cohomology-in:utt-by:mike} for its role in cohomology theory in \HoTT.
Essentially, one could present the \EilenbergSteenrod\ axioms \emph{without} the axiom of choice,
but it would be difficult for pointed arrows whose codomains are Eilenberg--Mac~Lane spaces,
an important example of cohomology theories, to satisfy these axioms \emph{within} \HoTT.
In any case,
because we only deal with cell complexes with finite cell sets,
we do not have to worry about the axiom of choice,
for it is provable in \HoTT\ for finite sets.

\smallskip

Anyway, here are the axioms we use.
We write $h^n(X)$ to denote the $n$th group in the sequence for a pointed type $X$.
\begin{description}
  \item[Suspension]
    There is an isomorphism between $h^{n+1}(\susp(X))$ and $h^n(X)$,
    and the choice of isomorphisms is natural in $X$.
  \item[Exactness]
    For any pointed arrow $f : X \pto Y$,
    the following sequence is exact, which means the kernel of $h^n(f)$
    is exactly the image of $h^n(\Ccodom)$.
    \[
      \begin{tikzcd}[column sep=huge]
        h^n(\cofiber(f))\ar[r,"h^n(\Ccodom)"] &
        h^n(Y)\ar[r,"h^n(f)"] &
        h^n(X)
      \end{tikzcd}
    \]
  \item[Wedge]
    Let $I$ be a type satisfying the set-level axiom of choice.
    For any family of pointed types $X$ indexed by $I$,
    the group morphism

    \begin{displaymath}
      \iota^* : h^n\parens[\Bigg]{\bigwedgetype{i:I}{X(i)}} \to \prod_{i:I} h^n(X_i)
    \end{displaymath}

    induced by inclusions $X(i) \to \bigwedgetype{i:I}{X(i)}$ is a group isomorphism.
  \item[Dimension]
    For any integer $n \neq 0$, the group $h^n\parens{\booltype}$ is trivial.
\end{description}

The word \emph{ordinary} refers to satisfying the \axiom{dimension} axiom.
Interesting examples violating this axiom (but satisfying the rest),
such as $K$-theories or complex cobordism,
were discovered after the introduction of the framework,
and are called \emph{extra}\/ordinary cohomology theories.
Our result only handles ordinary ones.

It will be shown that these axioms
uniquely identify cohomology groups for finite CW complexes.
To begin with, we can calculate all the groups of the spheres
directly from the axioms:
\begin{lem}
  \label{lma:sphere}
  For any $m, n : \ZZ$ such that $n \geq 1$, $h^m\parens{\sphere{n}}$
  is isomorphic to $h^0\parens{\booltype}$ if $m = n$ and trivial otherwise.
\end{lem}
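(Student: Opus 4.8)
The plan is to use the definition $\sphere{n} \deq \susp^n\parens{\booltype}$ and to peel off the suspensions one at a time with the \axiom{suspension} axiom, reducing the whole computation to the \axiom{dimension} axiom applied to $\booltype = \sphere{0}$. In other words, every sphere is an iterated suspension of the two-point space, so its cohomology is determined by the cohomology of $\booltype$ together with the degree shift supplied by the suspension isomorphism.

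First I would prove by induction on $n \geq 1$ that $h^m\parens{\sphere{n}} \equiv h^{m-n}\parens{\booltype}$ for every $m : \ZZ$. The \axiom{suspension} axiom provides, for any pointed type $X$ and any $k$, an isomorphism $h^{k+1}\parens{\susp\parens{X}} \equiv h^{k}\parens{X}$. Instantiating this with $X \deq \sphere{n-1}$ and $k \deq m-1$ gives $h^m\parens{\sphere{n}} \equiv h^{m-1}\parens{\sphere{n-1}}$; the base case $n = 1$ already reads $h^m\parens{\sphere{1}} \equiv h^{m-1}\parens{\booltype}$, and for $n \geq 2$ the induction hypothesis rewrites the right-hand side as $h^{(m-1)-(n-1)}\parens{\booltype} = h^{m-n}\parens{\booltype}$. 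Note that only the existence of the suspension isomorphism is needed here, not its naturality in $X$, which is used elsewhere.

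Then I would invoke the \axiom{dimension} axiom, which states that $h^{k}\parens{\booltype}$ is trivial for every $k \neq 0$. Applied to $k = m - n$, this shows that $h^m\parens{\sphere{n}}$ is trivial whenever $m \neq n$. When $m = n$ we have $m - n = 0$, so the isomorphism of the previous step specializes to $h^m\parens{\sphere{n}} \equiv h^{0}\parens{\booltype}$, which is exactly the claim.

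I do not anticipate any genuine obstacle: the argument is a short induction, and the only point requiring care is the bookkeeping of the integer grading, since each suspension that is stripped off lowers the cohomological degree by one. The hypothesis $n \geq 1$ serves only to guarantee that at least one suspension is available to remove; in fact the same computation degenerates correctly to $h^0\parens{\booltype}$ at $m = n = 0$ as well.
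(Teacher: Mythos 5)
Your proof is correct and follows essentially the same route as the paper's: peel off suspensions via the \axiom{suspension} axiom until only $\booltype$ remains, then apply the \axiom{dimension} axiom when $m \neq n$. The paper compresses this into one sentence, while you spell out the induction and the grading bookkeeping explicitly, but the mathematical content is identical.
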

\begin{proof}
  Because the spheres are iterated suspensions of $\booltype$,
  one can apply the \axiom{suspension} axiom till it reaches $\booltype$
  and then the \axiom{dimension} axiom if the dimensions mismatch.
\end{proof}
And similarly the groups for the bouquets, or wedges of spheres.
They play an important role in our calculation of the groups of the CW complexes.
\begin{lem}
  \label{lma:bouquet}
  For any $m, n : \ZZ$ such that $n \geq 0$
  and any finite set $A$, $h^m\parens[\big]{\bigwedgetype{\dummy:A}{\sphere{n}}}$
  is isomorphic to $\prod_{\dummy:A} h^0\parens{\booltype}$ if $m = n$ and trivial otherwise.
\end{lem}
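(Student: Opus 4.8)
The plan is to reduce the computation to the single-sphere case already settled in Lemma~\ref{lma:sphere}, using the \axiom{wedge} axiom to turn the cohomology of a bouquet into a product. First I would recall from Section~\ref{sec:eilenberg-steenrod} that every finite set satisfies the set-level axiom of choice, so the \axiom{wedge} axiom applies with $I \deq A$ and delivers, in every degree $m$, a group isomorphism
\[
  h^m\parens[\Big]{\bigwedgetype{\dummy:A}{\sphere{n}}} \equiv \prod_{\dummy:A} h^m\parens{\sphere{n}}.
\]
It then suffices to identify each factor $h^m(\sphere{n})$ and to observe that an $A$-indexed product of trivial groups is again trivial.

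For the identification I would split on whether $n \geq 1$ or $n = 0$. When $n \geq 1$, Lemma~\ref{lma:sphere} tells us that $h^m(\sphere{n})$ is isomorphic to $h^0(\booltype)$ for $m = n$ and trivial for $m \neq n$. When $n = 0$, the sphere is definitionally $\booltype$, so the factor is literally $h^m(\booltype)$; this is $h^0(\booltype)$ for $m = n = 0$ and, by the \axiom{dimension} axiom, trivial for $m \neq 0$. In both cases the factor is $h^0(\booltype)$ exactly when $m = n$ and trivial otherwise, so substituting into the product above gives $\prod_{\dummy:A} h^0(\booltype)$ when $m = n$ and a product of trivial groups---hence the trivial group---when $m \neq n$, as claimed.

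I do not expect a serious obstacle here: once the \axiom{wedge} axiom is in hand, the argument is a routine case analysis feeding either Lemma~\ref{lma:sphere} or the \axiom{dimension} axiom into the product, followed by the trivial remark that an indexed product of trivial groups is trivial. The only point requiring slight care is that Lemma~\ref{lma:sphere} is stated only for $n \geq 1$, so the boundary case $n = 0$ must be discharged separately via $\sphere{0} \deq \booltype$ and the \axiom{dimension} axiom, rather than being subsumed by the previous lemma.
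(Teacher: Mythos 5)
Your proposal is correct and takes essentially the same route as the paper, which gives no written proof for this lemma and simply notes it follows \enquote{similarly} to Lemma~\ref{lma:sphere}---that is, by applying the \axiom{wedge} axiom (legitimate here because finite sets satisfy the set-level axiom of choice) and then identifying each factor via Lemma~\ref{lma:sphere} and the \axiom{dimension} axiom. Your separate handling of the boundary case $n=0$ via $\sphere{0} \deq \booltype$ correctly discharges the one point the paper's statement extends beyond Lemma~\ref{lma:sphere}.
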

There is also an important consequence
from the \axiom{suspension} and the \axiom{exactness} axioms
which will be applied repeatedly:
\begin{lem}
  \label{lma:LES}
  For any pointed arrow $f : X \pto Y$,
  there exist a natural choice of $\gamma_n$ such that the following is a long exact sequence:
  \begin{tightcenter}
    \begin{tikzpicture}
      \matrix[row sep=12mm,column sep=14mm] {
        &&
        \node (4-1) {$\dots$};
        \\
        \node (3-3) {$h^{n+1}(X)$};
        &
        \node (3-2) {$h^{n+1}(Y)$};
        &
        \node (3-1) {$h^{n+1}(\cofiber(f))$};
        \\
        \node (2-3) {$h^n(X)$};
        &
        \node (2-2) {$h^n(Y)$};
        &
        \node (2-1) {$h^n(\cofiber(f))$};
        \\
        \node (1-3) {$\dots$};
        &&
        \\
      };
      \draw [arrow] (1-3) to [out=40,in=220,edge node={node [sloped, above] {$\gamma_{n}$}}] (2-1) ;
      \draw [arrow] (2-1) -- (2-2) node [edgelabel, above] {$h^n(\Ccodom)$} ;
      \draw [arrow] (2-2) -- (2-3) node [edgelabel, above] {$h^n(f)$} ;
      \draw [arrow] (2-3) to [out=40,in=220,edge node={node [slopedlabel, above] {$\gamma_{n+1}$}}] (3-1) ;
      \draw [arrow] (3-1) -- (3-2) node [edgelabel, above] {$h^{n+1}(\Ccodom)$} ;
      \draw [arrow] (3-2) -- (3-3) node [edgelabel, above] {$h^{n+1}(f)$} ;
      \draw [arrow] (3-3) to [out=40,in=220,edge node={node [slopedlabel, above] {$\gamma_{n+2}$}}] (4-1) ;
    \end{tikzpicture}
  \end{tightcenter}
\end{lem}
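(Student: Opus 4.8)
The plan is to exhibit the desired long exact sequence as the cohomology of the iterated cofiber (Puppe) sequence of $f$, with the suspended terms folded back by the \axiom{suspension} isomorphism. First I would extend $f$ to the cofiber sequence
\[
  X \xrightarrow{\ f\ } Y \xrightarrow{\ \Ccodom\ } \cofiber(f) \xrightarrow{\ \delta\ } \susp(X) \xrightarrow{\ \susp(f)\ } \susp(Y),
\]
in which each object is the cofiber of the preceding map. The one nontrivial identification is that the cofiber of $\Ccodom : Y \pto \cofiber(f)$ is equivalent to $\susp(X)$: stacking the defining pushout square of $\cofiber(f)$ on top of the defining pushout square of $\cofiber(\Ccodom)$ and applying the pushout-pasting lemma --- exactly as in the proof of Lemma~\ref{lma:quot-as-wedge} --- exhibits $\cofiber(\Ccodom)$ as the pushout of $\unittype \leftarrow X \to \unittype$, which is $\susp(X)$ by definition. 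This simultaneously names the connecting map $\delta \deq \Ccodom_{\Ccodom} : \cofiber(f) \to \susp(X)$; applying the same identification once more gives $\cofiber(\delta) \simeq \susp(Y)$ with the next inclusion being $\susp(f)$ (up to a sign that is irrelevant for exactness).

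Next I would feed each of the three maps $f$, $\Ccodom$, and $\delta$ into the \axiom{exactness} axiom. Since $h^n$ is contravariant, each cofiber sequence $A \xrightarrow{g} B \xrightarrow{\Ccodom} \cofiber(g)$ yields an exact sequence $h^n(\cofiber(g)) \to h^n(B) \to h^n(A)$. Applied to $f$ this gives exactness at $h^n(Y)$ directly. Applied to $\Ccodom$ it gives the exact sequence $h^n(\susp(X)) \xrightarrow{h^n(\delta)} h^n(\cofiber(f)) \xrightarrow{h^n(\Ccodom)} h^n(Y)$, hence exactness at $h^n(\cofiber(f))$. Applied to $\delta$ it gives $h^n(\susp(Y)) \xrightarrow{h^n(\susp(f))} h^n(\susp(X)) \xrightarrow{h^n(\delta)} h^n(\cofiber(f))$, hence exactness at $h^n(\susp(X))$.

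I would then define the connecting map $\gamma_{n+1} \deq h^{n+1}(\delta) \circ s_X^{-1} : h^n(X) \to h^{n+1}(\cofiber(f))$, where $s_X : h^{n+1}(\susp(X)) \cong h^n(X)$ is the \axiom{suspension} isomorphism. Since $s_X$ is an isomorphism, $\image \gamma_n = \image h^n(\delta)$, so the exactness at $h^n(\cofiber(f))$ above becomes exactness between $\gamma_n$ and $h^n(\Ccodom)$. Likewise $s_X$ carries $\ker h^{n+1}(\delta)$ isomorphically onto $\ker \gamma_{n+1}$; combining $\ker h^{n+1}(\delta) = \image h^{n+1}(\susp(f))$ with the naturality square relating $h^{n+1}(\susp(f))$ and $h^n(f)$ yields $\ker \gamma_{n+1} = \image h^n(f)$, i.e.\ exactness at $h^n(X)$. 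That square commutes because the \axiom{suspension} axiom supplies an isomorphism natural in the space; the same naturality, together with functoriality of $\cofiber$ and $\susp$ and naturality of the pasting equivalence $\cofiber(\Ccodom) \simeq \susp(X)$, makes $\gamma_n$ natural in $f$.

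The hard part will be the first step: building the cofiber sequence so that $\delta$ and $\susp(f)$ are pinned down up to coherent homotopy rather than merely abstract equivalence. The two pasting equivalences $\cofiber(\Ccodom) \simeq \susp(X)$ and $\cofiber(\delta) \simeq \susp(Y)$ must be produced compatibly, so that the triangles identifying the connecting maps commute after transport and carry enough naturality data to drive the final naturality argument; once these coherences are in place, the remaining exactness bookkeeping is routine diagram chasing powered by the \axiom{exactness} and \axiom{suspension} axioms.
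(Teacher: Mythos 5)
Your proposal is correct and takes essentially the same route as the paper's own (much terser) proof: the paper likewise observes that the cofiber of $\Ccodom : Y \pto \cofiber(f)$ is equivalent to $\susp(X)$, applies the \axiom{exactness} axiom to the iterated cofiber (Puppe) sequence, and uses the \axiom{suspension} isomorphism to fold $h^{n+1}(\susp(\cdots))$ back down to $h^n(\cdots)$. Your write-up just makes explicit the definition of the connecting maps $\gamma_n$ and the naturality bookkeeping that the paper leaves implicit.
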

\begin{proof}
  The key observation is that the cofiber of $\Ccodom : Y \pto \cofiber\parens{f}$
  is equivalent to $\susp\parens{X}$, and thus $h^{n+1}$
  on iterated $\Ccodom$'s is equivalent to $h^{n+1}\parens{\susp(\cdots)}$,
  which by the \axiom{suspension} axiom is isomorphic to $h^n\parens{\cdots}$.
  The exactness of $h^{n+1}$ on iterated $\Ccodom$'s is given by the \axiom{exactness} axiom.
\end{proof}

Finally, let us state the connection between cogroups and cohomology theories:
\begin{prop}
  \label{prop:h-is-hom}
  Let $X$ be a cogroup, and $Y$ a pointed type.
  If $h$ is a cohomology theory, then the map
  \[
    h^n : \trunc[\big]{X \pto Y}0 \to \hom(h^n(Y), h^n(X))
  \]
  is a group homomorphism for each $n : \ZZ$.
\end{prop}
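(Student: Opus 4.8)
The plan is to unfold the group operation on the source and transport it through the contravariant functor $h^n$. By the group structure on $\trunc[\big]{X \pto Y}{0}$ established above, the product of $[f]$ and $[g]$ is represented by the composite $\nabla_Y \fcomp (\wedgetype{f}{g}) \fcomp \sigma$, where $\sigma : X \pto \wedgetype{X}{X}$ is the comultiplication and $\nabla_Y : \wedgetype{Y}{Y} \pto Y$ is the fold map. Since $h^n$ is a pointed-homotopy-invariant contravariant functor into abelian groups, the assignment $[f] \mapsto h^n(f)$ descends to the $0$-truncation (the codomain being a set), and contravariant functoriality gives
\[
  h^n\parens[\big]{\nabla_Y \fcomp (\wedgetype{f}{g}) \fcomp \sigma}
  = h^n(\sigma) \fcomp h^n(\wedgetype{f}{g}) \fcomp h^n(\nabla_Y).
\]
I will read off the three factors under the \axiom{Wedge} isomorphisms $\iota^*_Y : h^n(\wedgetype{Y}{Y}) \cong h^n(Y) \times h^n(Y)$ and $\iota^*_X : h^n(\wedgetype{X}{X}) \cong h^n(X) \times h^n(X)$, which are available because the two-element indexing set is finite and hence satisfies set-level choice; each is the pairing $(h^n(\Winl), h^n(\Winr))$ of the two wedge inclusions.

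Under these identifications the three factors become transparent. Because $\nabla_Y \fcomp \Winl = \nabla_Y \fcomp \Winr = \mathrm{id}$, the factor $h^n(\nabla_Y)$ becomes the diagonal $a \mapsto (a,a)$. Because $(\wedgetype{f}{g}) \fcomp \Winl = \Winl \fcomp f$ and $(\wedgetype{f}{g}) \fcomp \Winr = \Winr \fcomp g$, the middle factor $h^n(\wedgetype{f}{g})$ becomes the product $(a,b) \mapsto (h^n(f)(a), h^n(g)(b))$. Granting that $h^n(\sigma)$ becomes the addition $(a,b) \mapsto a + b$, the full composite sends $a$ to $h^n(f)(a) + h^n(g)(a)$, that is, $h^n([f] \cdot [g]) = h^n(f) + h^n(g)$ in $\hom(h^n(Y), h^n(X))$; since a function between groups that respects the operation is automatically a homomorphism, this is exactly what must be shown.

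The step I expect to be the main obstacle is the identification of $h^n(\sigma)$ with addition. Put $\phi \deq h^n(\sigma) \fcomp (\iota^*_X)^{-1} : h^n(X) \times h^n(X) \to h^n(X)$; it is a group homomorphism, being built from cohomology maps and a \axiom{Wedge} isomorphism. The counit laws state that $\pi_1 \fcomp \sigma$ and $\pi_2 \fcomp \sigma$ are homotopic to $\mathrm{id}$ as pointed maps, so applying $h^n$ gives $h^n(\sigma) \fcomp h^n(\pi_1) = \mathrm{id}$ and $h^n(\sigma) \fcomp h^n(\pi_2) = \mathrm{id}$. Since $\pi_1 \fcomp \Winl = \mathrm{id}$ while $\pi_1 \fcomp \Winr$ is constant (and symmetrically for $\pi_2$), and since $h^n$ sends constant maps to $0$, the map $h^n(\pi_1)$ corresponds under $\iota^*_X$ to $a \mapsto (a, 0)$ and $h^n(\pi_2)$ to $b \mapsto (0, b)$; hence $\phi(a, 0) = a$ and $\phi(0, b) = b$. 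Writing $(a, b) = (a, 0) + (0, b)$ and using that $\phi$ is a homomorphism then forces $\phi(a, b) = a + b$. The vanishing of $h^n$ on constant maps follows from $h^n(\unittype) = 0$, which I would obtain from the \axiom{Wedge} axiom applied to the empty family: its wedge is $\unittype$, so $h^n(\unittype)$ is an empty product and hence trivial. What remains is bookkeeping --- tracking which inclusions and projections occur and confirming that the counit data are genuine pointed homotopies, so that $h^n$ really equates the two sides.
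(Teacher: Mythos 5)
Your proof is correct and takes essentially the same route as the paper: the paper's argument is precisely the commutative diagram that decomposes $h^n\parens[\big]{\nabla \fcomp (\wedgetype{f}{g}) \fcomp \sigma}$ through the \axiom{Wedge} isomorphisms into the diagonal, the product map $h^n(f)\oplus h^n(g)$, and the addition map. Your final paragraph---identifying $h^n(\sigma)$ with addition via the counit laws, using the vanishing of $h^n$ on constant maps and the triviality of $h^n\parens{\unittype}$ from the empty wedge---is a correct filling-in of the one step the paper compresses into \enquote{commutativity follows from the cogroup structure and the \axiom{wedge} axiom.}
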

\begin{proof}
  This follows from the commutativity of
  the diagram:
  \[
    \begin{tikzcd}[column sep=small,row sep=small,baseline=(O.base)]
      &[+1em] h^n(Y) \oplus h^n(Y) \ar[r,"h^n(f)\oplus h^n(g)"]
      &[+4em] h^n(X) \oplus h^n(X) \ar[dr,"\pi_1 \cdot \pi_2"] &[+1em] \\
      h^n(Y)\ar[dr,"h^n(\nabla)",swap]\ar[ur,"\Delta"] &&& |[alias=O]| h^n(X) \\
      & h^n(Y \vee Y)\ar[r,"h^n(f\vee g)",swap]\ar[uu,"\equiv"] &
      h^n(X \vee X)\ar[ur,"h^n(\sigma)",swap]\ar[uu,"\equiv"] & \\
    \end{tikzcd}
  \]
  The top row is the addition of $h^n(f)$ and $h^n(g)$ in the group of group homomorphisms from $h^n(Y)$ to $h^n(X)$,
  and the bottom row is $h^n$ applying to the addition of $f$ and $g$ in the group induced by the cogroup structure.
  Commutativity follows from the cogroup structure and the \axiom{wedge} axiom.\footnote{Technically speaking, the binary case of the \axiom{wedge} axiom is derivable from other axioms.}
\end{proof}
This key fact is a generalization of \cite[Lemma~4.60]{hatcher-at}
to cogroups,
and will be used in the proof of Lemma~\ref{lma:cochain}.

\section{Equivalence of Two Cohomology Theories}
\label{sec:equivalence}

Our main result is the following:
\begin{thm}\label{thm:main}
  For any ordinary reduced cohomology theory $h$,
  any pointed finite CW complex $X$ and any $n : \ZZ$,
  $h^n(X)$ is isomorphic to $H^n(X; h^0\parens{\booltype})$.
\end{thm}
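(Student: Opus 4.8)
The plan is to reconstruct a cochain complex out of the abstract theory $h$ along the skeletal filtration, show by a diagram chase that its cohomology computes $h^\bullet(X)$, and finally identify it with the genuine cellular cochain complex via the degree calculation. Write $X_0 \subseteq X_1 \subseteq \cdots \subseteq X_N = X$ for the (finite) skeleta, so that the cofiber of each inclusion $X_{k-1} \hookrightarrow X_k$ is $\quotient{X_k}{X_{k-1}} \simeq \bigwedgetype{\dummy:A_k}{\sphere{k}}$ by Lemma~\ref{lma:quot-as-wedge}, and set $G \deq h^0\parens{\booltype}$.

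First I would put $D^k \deq h^k\parens{\quotient{X_k}{X_{k-1}}}$. By Lemma~\ref{lma:quot-as-wedge} and Lemma~\ref{lma:bouquet} this is $\prod_{\dummy:A_k} G \cong \hom\parens{\formalsum{A_k}, G}$, exactly the $k$th cellular cochain group. I would then define the coboundary $\tilde\coboundary^k \deq \gamma \fcomp h^k(\Ccodom) : D^k \to D^{k+1}$, where $h^k(\Ccodom) : D^k \to h^k(X_k)$ is induced by the collapse $\Ccodom : X_k \pto \quotient{X_k}{X_{k-1}}$ and $\gamma : h^k(X_k) \to D^{k+1}$ is the connecting map of Lemma~\ref{lma:LES} for $X_k \hookrightarrow X_{k+1}$. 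That $\tilde\coboundary^{k+1} \fcomp \tilde\coboundary^k = 0$ is immediate from exactness (the two maps $\gamma$ and $h^{k+1}(\Ccodom)$ are consecutive in the sequence of Lemma~\ref{lma:LES} and hence compose to zero), so $(D^\bullet, \tilde\coboundary)$ is a cochain complex.

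Next I would show that $h^m(X)$ is isomorphic to the $m$th cohomology of $(D^\bullet, \tilde\coboundary)$. The crucial input is that $h^j\parens[\big]{\bigwedgetype{\dummy:A_k}{\sphere{k}}}$ vanishes for $j \neq k$ (Lemma~\ref{lma:bouquet}). Splicing the long exact sequences (Lemma~\ref{lma:LES}) for successive inclusions then shows that $h^m(X_k)$ is independent of $k$ for $k \geq m+1$, hence equals $h^m(X)$ since the complex is finite; a standard diagram chase identifies this stable value with $\ker \tilde\coboundary^m / \image \tilde\coboundary^{m-1}$. This step is pure homological algebra and uses nothing about $h$ beyond the two cited lemmas.

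The crux is to identify $\tilde\coboundary^k$ with the genuine cellular coboundary $\hom\parens{\boundary_{k+1}, G}$ under $D^k \cong \hom\parens{\formalsum{A_k}, G}$; this is Lemma~\ref{lma:cochain}. Unwinding $\gamma$ and the cofiber identifications reduces it, entry by entry over pairs of cells $b : A_{k+1}$ and $a : A_k$, to the claim that $h$ sends the self-map $\proj_a \fcomp e \fcomp \Ccodom \fcomp \lam{x}\alpha_{k+1}\pair{b}{x}$ of $\sphere{k}$ (whose degree is $\coeff(b,a)$ by definition, with $e$ the equivalence of Lemma~\ref{lma:quot-as-wedge}) to multiplication by $\coeff(b,a)$ on $h^k\parens{\sphere{k}} \cong G$. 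This in turn follows from Proposition~\ref{prop:h-is-hom}: as $\sphere{k}$ is a suspension, hence a cogroup, $f \mapsto h^k(f)$ is a homomorphism from the group $\trunc{\sphere{k} \pto \sphere{k}}{0}$ (on which $\deg$ is an isomorphism onto $\ZZ$ sending $\mathrm{id}$ to $1$) to the additive group of $\hom\parens{h^k\parens{\sphere{k}}, h^k\parens{\sphere{k}}}$; since the class of a degree-$d$ map is $d$ times that of $\mathrm{id}$ and $h^k(\mathrm{id}) = \mathrm{id}$, its image is multiplication by $d$. I expect this identification to be the main obstacle --- the authors note it stood as a conjecture in the thesis --- because it requires carefully matching the connecting map $\gamma$, the collapse maps, and the cogroup structures used to compute $\deg$ against those seen by $h$, tracking the naturality of $\gamma$ and of the suspension isomorphism throughout.
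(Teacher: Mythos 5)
Your route is the paper's own: reconstruct a cochain complex from $h$ along the skeleta and show its kernel-image quotients compute $h^\bullet(X)$ (this is Lemma~\ref{lma:reformulate}), then identify that complex with the cellular one cell-by-cell, using Lemmas~\ref{lma:quot-as-wedge} and~\ref{lma:bouquet} for the groups and Proposition~\ref{prop:h-is-hom} together with suspension-invariance of degree for the coboundaries (this is Lemma~\ref{lma:cochain}). Your coboundary $\gamma \fcomp h^k(\Ccodom)$ is superficially different from the paper's choice, which takes the connecting homomorphism of the cofiber sequence $\quotient{X_k}{X_{k-1}} \to \quotient{X_{k+1}}{X_{k-1}} \to \quotient{X_{k+1}}{X_k}$ directly, but the two agree by naturality of $\gamma$, and your argument that $h^k$ sends a degree-$d$ self-map of $\sphere{k}$ to multiplication by $d$ is precisely how the paper proves the crux.

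The genuine gap is at dimension zero. In a \emph{reduced} theory, $D^0$ is not \enquote{exactly the $0$th cellular cochain group}: $X_0$ is a pointed finite set, so by Lemma~\ref{lma:zero-as-wedge} and Lemma~\ref{lma:bouquet} one gets $h^0(X_0) \equiv \prod_{\sum_{x:A_0}(a \ne x)} h^0\parens{\booltype}$, which is one factor of $h^0\parens{\booltype}$ short of $\hom\parens[\big]{\formalsum{A_0}, h^0\parens{\booltype}}$; moreover the cellular complex you must compare against is the \emph{augmented} one, ending in $\hom\parens{\ZZ, h^0\parens{\booltype}}$, and this term is absent from your $D^\bullet$ altogether (your $D^{-1}$ is implicitly trivial). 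This is not mere bookkeeping: the paper repairs it by taking $h^0\parens{\booltype} \times h^0(X_0)$ at level zero, and it stresses that the obvious isomorphism onto $\prod_{\dummy:A_0} h^0\parens{\booltype}$ (put the extra factor at the basepoint, use $f$ elsewhere) does \emph{not} make the square into dimension $1$ commute; one needs the skewed isomorphism $\bar f(a) = g$, $\bar f(x) = f(x)\cdot g$ for $x \ne a$ (the \enquote{sea level} trick of Figure~\ref{fig:commute:degree-zero}). As written, your sketch establishes the theorem for $n \geq 1$ (and trivially for $n < 0$), but the case $n = 0$ --- both the identification of the groups and the commutation of the relevant square --- is missing.
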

This basically states that, on CW complexes, two notions of cohomology coincide.
The significance is that it connects an explicit construction
with a rather abstract framework over a wide range of types.

Note that, when the index $n$ is negative, we can show that $h^n(X)$ is trivial for any CW-complex $X$. (See Section~\ref{sec:reformulate}.) The cellular cohomology groups $H^n(X; G)$ are thus conveniently extended to negative indices $n$ by defining $H^n(X; G)$ to be the trivial group for all $n < 0$. In other words, the theorem is made true for $n < 0$ by construction.

A classical account of the coincidence of these two notions of cohomology may be found in \cite[Chapters~13-15]{may-concise}.
Our approach is to break this theorem into two parts:
\begin{enumerate}
  \item
    We prove that from any ordinary cohomology theory $h$ satisfying the axioms,
    we may reconstruct a cochain complex
    so that its kernel-image quotients
    are the same as the groups directly given by $h$.
  \item
    We show that the reconstructed cochain complex
    is equivalent to the cochain complex used in the cellular cohomology,
    and thus the cellular cohomology groups and $h$ should agree.
\end{enumerate}



\smallskip

In details, these steps are:
\begin{lem}[reformulation of ordinary cohomology groups]
  \label{lma:reformulate}
  For any ordinary reduced cohomology theory $h$
  and any pointed finite CW complex $X$,
  there is a choice of \emph{coboundary functions} $\coboundary_n$
  forming a cochain complex
  \[
    \begin{tikzcd}[column sep=scriptsize]
      \cdots &
      h^n(\quotient{X_n}{X_{n-1}})\ar[l] &
      h^{n-1}(\quotient{X_{n-1}}{X_{n-2}})\ar[l,"\coboundary_n"'] &
      \cdots\ar[l] \\
      & h^1(\quotient{X_1}{X_0})\ar[urr,out=40,in=220] &
      h^0\parens{\booltype} \times h^0(X_0)\ar[l,"\coboundary_1"'] &
      h^0\parens{\booltype}\ar[l,"\coboundary_0"']
    \end{tikzcd}
  \]
  such that $h^n(X)$ is isomorphic to the quotient
  of the kernel of $\coboundary_{n+1}$
  by the image of $\coboundary_n$
  for any $n \geq 0$.
  Moreover, $h^n(X)$ is trivial for $n < 0$.
\end{lem}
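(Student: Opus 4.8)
The plan is to run the classical skeletal-filtration argument using only the \EilenbergSteenrod\ axioms, so that degrees play no role here (they enter only when this abstract cochain complex is compared with the combinatorial one, in Lemma~\ref{lma:cochain}). First I would record, for each $n \geq 1$, the cofiber sequence $X_{n-1} \to X_n \to \quotient{X_n}{X_{n-1}}$ coming from the skeletal inclusion, together with the equivalence $\quotient{X_n}{X_{n-1}} \simeq \bigwedgetype{\dummy:A_n}{\sphere{n}}$ of Lemma~\ref{lma:quot-as-wedge}. Feeding each such sequence into Lemma~\ref{lma:LES} yields a long exact sequence relating $h^\bullet(X_{n-1})$, $h^\bullet(X_n)$ and $h^\bullet(\quotient{X_n}{X_{n-1}})$, and Lemma~\ref{lma:bouquet} tells us that $h^m(\quotient{X_n}{X_{n-1}})$ is concentrated in degree $m = n$. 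From these two inputs a short induction on the skeleta gives the two facts I rely on throughout: $h^m(X_k)$ is trivial whenever $m > k$, and the restriction $h^m(X_k) \to h^m(X_{k-1})$ is an isomorphism whenever $m \neq k, k-1$. In particular $h^m(X) \simeq h^m(X_{m+1})$ (cohomology in degree $m$ depends only on the $(m{+}1)$-skeleton), and $h^m(X)$ is trivial for $m < 0$, since then $m \neq k,k-1$ at every stage, so $h^m(X) \simeq h^m(X_0)$, which vanishes by Lemma~\ref{lma:bouquet}.

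Next I would define the cochain complex. For $n \geq 1$ set $C^n \deq h^n(\quotient{X_n}{X_{n-1}})$, and take $\coboundary_n : C^{n-1} \to C^n$ to be the composite of the restriction $h^{n-1}(\Ccodom) : h^{n-1}(\quotient{X_{n-1}}{X_{n-2}}) \to h^{n-1}(X_{n-1})$ (from the pair $(X_{n-1},X_{n-2})$) with the connecting map $\gamma : h^{n-1}(X_{n-1}) \to h^n(\quotient{X_n}{X_{n-1}})$ (from the pair $(X_n,X_{n-1})$). That $\coboundary_{n+1} \circ \coboundary_n = 0$ is immediate: the composite contains the two consecutive arrows $\gamma$ and $h^n(\Ccodom)$ of the long exact sequence of $(X_n,X_{n-1})$, whose composite vanishes by \axiom{exactness}. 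Naturality of $\coboundary_n$ in $X$ follows from the naturality of the connecting maps asserted in Lemma~\ref{lma:LES}.

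The heart of the proof is the identification $h^m(X) \simeq \quotient{\ker\coboundary_{m+1}}{\image\coboundary_m}$, which is a diagram chase through the pairs $(X_{m-1},X_{m-2})$, $(X_m,X_{m-1})$ and $(X_{m+1},X_m)$. The vanishing facts make both restrictions $h^{m-1}(\Ccodom) : C^{m-1} \to h^{m-1}(X_{m-1})$ and $h^m(\Ccodom) : C^m \to h^m(X_m)$ surjective (their cokernels inject into $h^{m-1}(X_{m-2})$ and $h^m(X_{m-1})$, both trivial). Surjectivity of $h^{m-1}(\Ccodom)$ together with exactness identifies $\image\coboundary_m$ with $\ker\bigl(h^m(\Ccodom)\bigr)$, while $\ker\coboundary_{m+1}$ is the preimage under $h^m(\Ccodom)$ of $\ker\gamma = \image\bigl(h^m(X_{m+1}) \to h^m(X_m)\bigr)$; since the latter map is injective ($h^m(\quotient{X_{m+1}}{X_m})$ being trivial in degree $m$), the quotient $\quotient{\ker\coboundary_{m+1}}{\image\coboundary_m}$ is carried isomorphically onto $h^m(X_{m+1}) \simeq h^m(X)$. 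I would assemble these identifications into a single commuting diagram of the relevant exact sequences and read off the isomorphism, checking that it is natural by appealing once more to the naturality in Lemma~\ref{lma:LES}.

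The remaining work — and where I expect most of the friction, especially in the \Agda\ formalization — is the bottom of the complex. Because the theory is \emph{reduced}, there is no genuine $X_{-1}$, so the uniform pattern $C^n = h^n(\quotient{X_n}{X_{n-1}})$ breaks at $n = 0$: here one sets $X_{-1} \deq \unittype$, augments the complex with the degree-$(-1)$ term $h^0(\booltype)$ (playing the role of $\hom(\ZZ,G)$ for $G = h^0(\booltype)$), and takes the degree-$0$ group to be the unreduced $h^0(\booltype) \times h^0(X_0)$, with $\coboundary_0$ the augmentation. One then checks that the chase above still delivers $h^0(X) \simeq \quotient{\ker\coboundary_1}{\image\coboundary_0}$ after this modification, the extra $h^0(\booltype)$ exactly accounting for the one degree of freedom that reducedness removes. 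The genuinely delicate points are thus not the middle of the complex (a mechanical exact-sequence chase) but pinning down these low-dimensional bookkeeping conventions so that they line up, and threading the required naturality of every map through the induction.
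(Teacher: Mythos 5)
Your proof is correct, and it rests on the same two pillars as the paper's (long exact sequences from cofiber squares via Lemma~\ref{lma:LES}, plus the concentration-in-one-degree results coming from Lemmas~\ref{lma:quot-as-wedge}, \ref{lma:zero-as-wedge} and \ref{lma:bouquet}), but it decomposes the key construction differently. You define $\coboundary_n$ as the composite of the restriction $h^{n-1}(\Ccodom)$ with the connecting map $\gamma$ of the pair $(X_n,X_{n-1})$, which makes $\coboundary_{n+1}\circ\coboundary_n=0$ immediate but forces the identification $h^n(X)\simeq\quotient{\ker\parens{\coboundary_{n+1}}}{\image\parens{\coboundary_n}}$ to be an element-level chase through three pairs (surjectivity of two restrictions, a preimage of a subgroup, injectivity of $h^n(X_{n+1})\to h^n(X_n)$). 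The paper instead builds the full grid of iterated quotients $\quotient{X_m}{X_k}$ and takes $\coboundary_{n+1}$ to be the connecting homomorphism of the triple cofiber sequence $\quotient{X_n}{X_{n-1}}\to\quotient{X_{n+1}}{X_{n-1}}\to\quotient{X_{n+1}}{X_n}$; with that choice Lemma~\ref{lma:mismatch} kills both flanking terms of the long exact sequence, so $\ker\parens{\coboundary_{n+1}}\simeq h^n(\quotient{X_{n+1}}{X_{n-1}})$ and $\coker\parens{\coboundary_{n+1}}\simeq h^{n+1}(\quotient{X_{n+1}}{X_{n-1}})$ fall out of exactness with no chase, two further cofiber squares supply $h^n(\quotient{X_{n+1}}{X_{n-2}})\simeq h^n(X)$ together with the needed injection and surjection, and the final isomorphism is read off from a single reusable, purely group-theoretic lemma about a commutative square. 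Your route is the textbook one and conceptually lighter up front; the paper's route trades that for kernels and cokernels that are themselves cohomology groups of quotient complexes and for a diagram chase packaged once and for all in abstract group theory --- a decomposition visibly chosen with the \Agda\ mechanization in mind.

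Two caveats. First, your $\coboundary_n$ and the paper's agree only up to a standard but nontrivial compatibility between connecting maps of pairs and triples; since Lemma~\ref{lma:cochain} consumes \emph{the} coboundary maps produced here, adopting the composite definition means you must either prove that compatibility or redo the comparison with $\hom\parens{\boundary_{n+1},h^0\parens{\booltype}}$ for your maps --- not a gap in this lemma, but a cost you would pay downstream. Second, your vanishing of $h^m(X_0)$ for $m\neq 0$ quietly needs Lemma~\ref{lma:zero-as-wedge} (expressing $X_0$ as a wedge of copies of $\booltype$ centered at the basepoint) before Lemma~\ref{lma:bouquet} applies; you cite only the latter, though your closing discussion of the dimension-$0$ bookkeeping shows you are aware that the basepoint is exactly what makes this case delicate.
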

The above lemma states that any ordinary cohomology groups
are also the kernel-image quotients of some cochain complex,
similar to cellular cohomology groups.
It is then sufficient to show that the cochain complexes are equivalent.
\begin{lem}[two cochain complexes agree]
  \label{lma:cochain}
  Let $h$ be an ordinary reduced cohomology theory
  and $X$ be a pointed finite CW complex.
  Let $\coboundary_n$ be the group homomorphisms given by
  Lemma~\ref{lma:reformulate} after reformulation.
  There exist an isomorphism
  \[
    k_n : h^n\parens{\quotient{X_{n}}{X_{n-1}}} \equiv \hom\parens[\big]{\formalsum{A_{n}}, h^0\parens{\booltype}}
  \]
  for any $n\geq 1$, an isomorphism for the zeroth dimension
  \[
    k_0 : h^0\parens{\booltype} \times h^0(X_0) \equiv \hom\parens[\big]{\formalsum{A_0}, h^0\parens{\booltype}}
  \]
  and an isomorphism for the augmented part
  \[
    k_{-1} : h^0\parens{\booltype} \equiv \hom\parens[\big]{\ZZ, h^0\parens{\booltype}}
  \]
  such that for any $n \geq 1$, the following square commutes
  \[
    \begin{tikzcd}[column sep=2.5cm,row sep=1.2cm]
      h^{n+1}\parens{\quotient{X_{n+1}}{X_n}}\ar[d,equiv,"k_{n+1}"'] &
      h^{n}\parens{\quotient{X_{n}}{X_{n-1}}}
      \ar[l,"\coboundary_{n+1}"']\ar[d,equiv,"k_n"] \\
      \hom\parens[\big]{\formalsum{A_{n+1}}, h^0\parens{\booltype}} &
      \hom\parens[\big]{\formalsum{A_{n}}, h^0\parens{\booltype}}
      \ar[l,swap,"\hom\parens{\boundary_{n+1}, h^0\parens{\booltype}}"]
    \end{tikzcd}
  \]
  and similarly for the case $n=0$ and the augmented part
  with suitable groups and group homomorphisms.
\end{lem}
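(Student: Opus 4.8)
The plan is to construct each $k_n$ as a composite of three canonical isomorphisms and then to verify the commuting square by computing both legs ``one matrix entry at a time,'' the point being that the integer matrix entries on the cellular side are exactly the degrees $\coeff(b,a)$. For $n \geq 1$ I would first use Lemma~\ref{lma:quot-as-wedge} to replace $\quotient{X_n}{X_{n-1}}$ by $\bigwedgetype{\dummy:A_n}{\sphere{n}}$, then apply $h^n$ and invoke the $m=n$ case of Lemma~\ref{lma:bouquet} to get $h^n(\bigwedgetype{\dummy:A_n}{\sphere{n}}) \equiv \prod_{\dummy:A_n} h^0(\booltype)$, and finally apply the universal property of the free abelian group, $\hom(\formalsum{A_n}, G) \equiv \prod_{\dummy:A_n} G$ with $G = h^0(\booltype)$. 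Composing these yields $k_n$. The maps $k_0$ and $k_{-1}$ are built analogously from the zeroth and augmented entries supplied by Lemma~\ref{lma:reformulate}, using $\hom(\ZZ, G) \equiv G$ together with the universal property of $\formalsum{A_0}$ and the pointed splitting of $X_0 \deq A_0$ into its basepoint and remaining components.

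For the square, observe that after conjugating by the $k$'s all four corners are products indexed by the cell sets $A_n$ and $A_{n+1}$, and that $\hom(\boundary_{n+1}, h^0(\booltype))$ is by definition the transpose of the integer matrix $(\coeff(b,a))_{b:A_{n+1},\, a:A_n}$. Hence it suffices to show that the $(b,a)$-entry of $k_{n+1} \circ \coboundary_{n+1} \circ k_n^{-1}$, viewed as an endomorphism of $h^0(\booltype)$, is multiplication by $\coeff(b,a)$. I would fix $b$ and $a$ and trace the generator of the $a$-th summand through the coboundary.

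The coboundary $\coboundary_{n+1}$ of Lemma~\ref{lma:reformulate} factors as the connecting map $\gamma_{n+1}$ of the long exact sequence (Lemma~\ref{lma:LES}) for the inclusion $X_n \to X_{n+1}$, precomposed with $h^n(\Ccodom)$. The crucial claim is that, transported along the wedge equivalences, this composite becomes $h^{n}$ (and then, via the suspension isomorphism, $h^{n+1}$) applied to the geometric composite of Figure~\ref{fig:degree-function}, namely $\proj_a \circ e \circ \Ccodom \circ \alpha_{n+1}\pair{b}{-} : \sphere{n} \to \sphere{n}$, which by definition has degree $\coeff(b,a)$. Since $\sphere n$ is a suspension, hence a cogroup, Proposition~\ref{prop:h-is-hom} makes $h^n : \trunc{\sphere n \pto \sphere n}{0} \to \hom(h^n(\sphere n), h^n(\sphere n))$ a group homomorphism; as $\deg$ is a group isomorphism $\trunc{\sphere n \pto \sphere n}{0} \equiv \ZZ$ sending $\mathrm{id}$ to $1$ and $h^n(\mathrm{id})$ is the identity, $h^n$ must send a degree-$d$ self-map to multiplication by $d$ on $h^n(\sphere n) \equiv h^0(\booltype)$. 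Taking $d = \coeff(b,a)$ gives exactly the desired entry, and naturality of the suspension isomorphism (together with property $\deg(\susp(f)) = \deg(f)$) reconciles the degree shift between $h^n$ and $h^{n+1}$.

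The main obstacle is precisely the identification in the previous paragraph: showing that $\gamma_{n+1}$, composed with the induced maps and carried along the equivalences of Lemma~\ref{lma:quot-as-wedge}, genuinely computes $h^n$ of the geometric degree composite, rather than merely something abstractly of the same shape. This requires unwinding the construction of $\gamma_{n+1}$ through the equivalence $\cofiber(\Ccodom) \simeq \susp(X_n)$ used in Lemma~\ref{lma:LES}, checking that the wedge projection $\proj_a$ is compatible with the inclusion-based isomorphism of Lemma~\ref{lma:bouquet}, and matching basepoints and orientations so that the suspended attaching map lands in the correct summand. The cases $n=0$ and the augmented part follow the same pattern once the degrees are replaced by the incidence data coming from $\augmentation$ and from the connected components of $X_0$.
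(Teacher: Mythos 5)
For $n \geq 1$ your proposal is essentially the paper's own argument. The isomorphisms $k_n$ are assembled the same way (Lemma~\ref{lma:quot-as-wedge}, then Lemma~\ref{lma:bouquet}, then $\hom\parens[\big]{\formalsum{A_{n}}, h^0\parens{\booltype}} \equiv \prod_{\dummy:A_{n}} h^0\parens{\booltype}$), and your entry-by-entry verification of the square is the same reduction the paper performs: after conjugating by the $k$'s, both maps take the form $\lam{f}\lam{x}[A_{n+1}]\sum_{y:A_{n}}{f(y)}^{\deg(\sigma_{x,y})}$; the fact that $h^n$ sends a degree-$d$ self-map of $\sphere{n}$ to multiplication by $d$ is exactly the intended use of Proposition~\ref{prop:h-is-hom} (the generalization of Hatcher's Lemma~4.60 that the paper cites for this purpose); and the dimension shift is absorbed by property~(\ref{it:deg-susp}), because the $\sigma_{x,y}$ coming from $\coboundary_{n+1}$ are the suspensions of the attaching composites defining $\coeff$ --- the identification you correctly single out as the main remaining work. (One small discrepancy: you describe $\coboundary_{n+1}$ as the connecting map for $X_n \to X_{n+1}$ precomposed with $h^n(\Ccodom)$, whereas Lemma~\ref{lma:reformulate} defines it as the connecting map for $\quotient{X_{n}}{X_{n-1}} \to \quotient{X_{n+1}}{X_{n-1}}$; these agree, but only after a naturality argument for $\gamma$ that your write-up should acknowledge.)

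The genuine gap is in the half sentence with which you dismiss the $n=0$ and augmented cases. Your $k_0$ comes from ``the pointed splitting of $X_0 \deq A_0$ into its basepoint and remaining components,'' i.e.\ the evident isomorphism sending a pair $(g,f)$, with $g : h^0\parens{\booltype}$ and $f : \prod_{\sum_{x:A_0} (a \ne x)} h^0\parens{\booltype}$, to the family that is $g$ at $a$ and $f(x)$ at $x \ne a$. The paper explicitly warns that this naive merge does \emph{not} make the required squares commute (Figure~\ref{fig:commute:degree-zero}, the isomorphism marked ``tricky''). The failure is visible already in the augmented square: $\hom\parens[\big]{\augmentation,h^0\parens{\booltype}}$ sends $g$ to the \emph{constant} family with value $g$ at every cell of $A_0$, whereas the factor $h^0(X_0)$ in the domain of $k_0$ is reduced information, recording values only relative to the basepoint; under the naive merge the components at $x \ne a$ can never acquire the common summand $g$, so the square commutes only when $A_0$ is a singleton or $g$ is trivial. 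The isomorphism that works is the twisted one, $(g,f) \mapsto \bar f$ with $\bar f(a) = g$ and $\bar f(x) = f(x)\cdot g$ for $x \ne a$: the value $g$ must be restored as a common ``sea level'' to all components. So the zeroth and augmented cases are not ``the same pattern'' with different incidence data; they require this one additional idea, which is missing from your proposal.
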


Our main result immediately follows from these two lemmas.

\begin{proof}[Proof of Theorem~\ref{thm:main}]
  \newcommand{\newh}{\tilde{h}}
  Let $\newh^n(X)$ be the kernel-image quotients of
  the coboundary maps $\delta$ given by Lemma~\ref{lma:reformulate}.
  We have $\newh^n(X) \simeq h^n(X)$ by the lemma.
  Because the coboundary maps $\delta$ and the dualized boundary maps
  are equivalent according to Lemma~\ref{lma:cochain},
  the resulting groups $\newh^n(X)$ and $H^n\parens{X;h^0\parens{\booltype}}$
  are also isomorphic, and thus the theorem.
\end{proof}

In the following subsections,
we will sketch the proofs of
Lemmas~\ref{lma:reformulate}~and~\ref{lma:cochain}.

\subsection{Reformulation of Cohomology Groups (Lemma~\ref{lma:reformulate})}
\label{sec:reformulate}

The central idea is to construct as many cofibers as possible
from its cellular description,
and then apply the \axiom{exactness} axiom on these cofibers
to obtain long exact sequences by Lemma~\ref{lma:LES}.
From the obtained long exact sequences we can then
calculate the groups of our interest.

\smallskip

Before constructing those cofibers, it is essential to observe that
there is a lemma complimentary to Lemma~\ref{lma:cochain}:
\begin{lem}
  \label{lma:mismatch}
  For any $m \ne n : \ZZ$ such that $n \geq 1$, $h^m\parens{\quotient{X_n}{X_{n-1}}}$ is trivial.
  Moreover, for any $m : \ZZ$ such that $m \ne 0$, $h^m\parens{X_0}$ is also trivial.
\end{lem}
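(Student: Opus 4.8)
The plan is to reduce both claims to the bouquet computation of Lemma~\ref{lma:bouquet}, which already tells us that the cohomology of a finite wedge of $n$-spheres vanishes outside degree $n$. The only real work is to exhibit each of the two types in question as such a wedge and then transport along the resulting equivalence, using that $h^m$ carries equivalences of pointed types to isomorphisms of abelian groups.

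For the first part, fix $n \geq 1$ and $m \neq n$. By Lemma~\ref{lma:quot-as-wedge} there is an equivalence $\quotient{X_n}{X_{n-1}} \simeq \bigwedgetype{\dummy:A_n}{\sphere{n}}$, and applying the functor $h^m$ turns this into an isomorphism $h^m\parens{\quotient{X_n}{X_{n-1}}} \simeq h^m\parens[\big]{\bigwedgetype{\dummy:A_n}{\sphere{n}}}$. Since $X$ is a finite CW complex the index set $A_n$ is finite, and $n \geq 1 \geq 0$, so Lemma~\ref{lma:bouquet} applies directly; as $m \neq n$ it delivers the trivial group. This part is essentially immediate once the wedge description is invoked.

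For the second part, the remaining step is to recognize $X_0 \deq A_0$, a finite pointed set, as a wedge of $0$-spheres. Writing $A_0'$ for the set of non-basepoint elements of $A_0$ — a finite set, extracted using the decidable equality available for finite sets — I would construct a pointed equivalence $A_0 \simeq \bigwedgetype{\dummy:A_0'}{\sphere{0}}$, sending the shared wedge point to the basepoint of $A_0$ and the distinguished non-base point of each summand $\sphere{0} = \booltype$ to the corresponding element of $A_0'$. Transporting $h^m$ along this equivalence and invoking Lemma~\ref{lma:bouquet} with $n = 0$ then yields that $h^m\parens{X_0}$ is trivial whenever $m \neq 0$ (the degenerate case where $A_0$ is just the basepoint gives the empty wedge $\unittype$, for which every $h^m$ is trivial).

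The only nontrivial step — and hence the main obstacle — is this last identification of $A_0$ with $\bigwedgetype{\dummy:A_0'}{\sphere{0}}$: one must peel the basepoint off the finite pointed set $A_0$ coherently, relying on decidable equality, and verify that the induced map is a pointed equivalence rather than merely a bijection of underlying sets. Once that is in hand, both statements follow purely formally from the bouquet lemma, with no further appeal to the \EilenbergSteenrod\ axioms.
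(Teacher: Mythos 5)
Your proposal is correct and follows essentially the same route as the paper: the first part via Lemma~\ref{lma:quot-as-wedge} plus Lemma~\ref{lma:bouquet}, and the second part via exactly the decomposition of $X_0$ as a wedge of $0$-spheres indexed by the non-basepoint elements, which the paper isolates as Lemma~\ref{lma:zero-as-wedge} (with index set $\sum_{x:A_0}(a \ne x)$, your $A_0'$). You also correctly identify that this pointed identification of $X_0$ with the wedge is the only step requiring real work, which matches the paper's treatment.
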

\begin{proof}
  This follows from Lemma~\ref{lma:quot-as-wedge} and the following lemma, and then Lemma~\ref{lma:bouquet}.
\end{proof}
\begin{lem}
  \label{lma:zero-as-wedge}
  Let $a$ be the distinguished point of $X$.
  The types $X_0$ and $\bigwedgetype{\sum_{x:A_0} (a \ne x)}{\booltype}$ are equivalent.
\end{lem}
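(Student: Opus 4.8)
The plan is to unfold both sides into coproducts and match them; the whole statement is the dimension-zero analogue of Lemma~\ref{lma:quot-as-wedge}, with the basepoint playing the role that $X_{n-1}$ plays there. Recall that $X_0 \deq A_0$ and that $\sphere{0} \deq \booltype$ is the two-element pointed set. Writing $I \deq \sum_{x:A_0}(a \ne x)$ for the type of $0$-cells other than the basepoint $a$, the goal becomes $A_0 \simeq \bigwedgetype{I}{\booltype}$.

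First I would compute the wedge. By definition $\bigwedgetype{I}{\booltype}$ is the pushout of $\unittype \leftarrow I \to \sum_{i:I}\booltype$ whose right leg is the basepoint inclusion $i \mapsto \pair{i}{\point{\booltype}}$. Since $\sum_{i:I}\booltype \simeq I + I$, and under this equivalence the basepoint inclusion corresponds to $\Cinl : I \to I + I$, the wedge is the cofiber of $\Cinl$. I would then use the interaction between coproducts and pushouts: collapsing the left summand of $I + I$ to a point leaves the right summand untouched. Concretely, the universal property identifies $\unittype \sqcup_{I} (I + I)$ with $\unittype + I$, because a map out of it into any $Z$ amounts to a point of $Z$ (the image of the collapsed left copy) together with a map $I \to Z$ (on the right copy), the data coming from the left copy contracting away. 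Hence $\bigwedgetype{I}{\booltype} \simeq \unittype + I$, and in particular the wedge is again a set.

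Second I would decompose the pointed set itself. Because $A_0$ is a standard finite set it has decidable equality, so for every $x : A_0$ we may decide whether $a = x$; this splits off the basepoint and yields $A_0 \simeq \unittype + \sum_{x:A_0}(a \ne x) = \unittype + I$, sending $a$ to the left summand and each remaining point $x$, paired with its witness of $a \ne x$, to the corresponding element of $I$. Composing the two equivalences gives $X_0 = A_0 \simeq \unittype + I \simeq \bigwedgetype{I}{\booltype}$, as desired.

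The step requiring the most care is keeping the basepoints aligned: the wedge has a single distinguished point (the image of $\unittype$ in the pushout), and it must be matched with $a$ rather than with one of the other $0$-cells — this is precisely why the index is $\sum_{x:A_0}(a \ne x)$, the $0$-cells \emph{apart from} $a$, instead of all of $A_0$. The decidable-equality decomposition is the only place the finiteness hypothesis is genuinely used, and it is exactly the step that would fail without decidable equality on $A_0$.
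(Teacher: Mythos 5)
Your proof is correct and matches the paper's approach: the paper likewise treats this as a direct calculation in the style of Lemma~\ref{lma:quot-as-wedge}, with the distinguished point $a$ serving as the center of the wedge, which is exactly the basepoint-alignment issue you identify. Your write-up simply makes explicit what the paper leaves implicit, namely computing $\bigwedgetype{I}{\booltype} \simeq \unittype + I$ from the pushout and splitting $A_0 \simeq \unittype + \sum_{x:A_0}(a \ne x)$ via decidable equality on the finite set $A_0$.
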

\begin{proof}
  This is again done by a direct calculation as in the proof of Lemma~\ref{lma:quot-as-wedge}.
  The formulation is different from Lemma~\ref{lma:quot-as-wedge} because
  one of the points---which is the distinguished point $a$ here---%
  is selected as the center of the wedge.
\end{proof}

The way to construct numerous cofiber squares is to consider the following grid diagram
where every grid is a pushout square.
\begin{center}
  \begin{tikzpicture}
    \matrix[row sep=5mm,column sep=3mm] {
      \node (0) {$X_0$}; &
      \node (1) {$X_1$}; &
      \node (2+) {$\dots$}; &
      \node (n) {$X_n$}; &
      \node (n1) {$X_{n+1}$}; &
      \node (n+) {$\dots$};
      \\
      \node (0/0) {$\unittype$}; &
      \node (1/0) {$\quotient{X_1}{X_0}$}; &
      \node (2+/0) {$\dots$}; &
      \node (n/0) {$\quotient{X_n}{X_0}$}; &
      \node (n1/0) {$\quotient{X_{n+1}}{X_0}$}; &
      \node (n+/0) {$\dots$};
      \\
      &
      \node (d/d) {$\ddots$}; &
      \node (2+/d) {$\ddots$}; &
      \node (n/d) {$\vdots$}; &
      \node (n1/d) {$\vdots$}; &
      \node (n+/d) {$\vdots$};
      \\
      &&
      \node (n-1/n-1) {$\unittype$}; &
      \node (n/n-1) {$\quotient{X_{n}}{X_{n-1}}$}; &
      \node (n1/n-1) {$\quotient{X_{n+1}}{X_{n-1}}$}; &
      \node (n+/n-1) {$\dots$};
      \\
      &&&
      \node (n/n) {$\unittype$}; &
      \node (n1/n) {$\quotient{X_{n+1}}{X_n}$}; &
      \node (n+/n) {$\dots$};
      \\
      &&&&
      \node (n1/n1) {$\unittype$}; &
      \node (n+/n1) {$\dots$};
      \\
    };
    \draw [arrow]
      (0) edge (1) edge (0/0)
      (1) edge (2+) edge (1/0)
      (2+) edge (n)
      (n) edge (n1) edge (n/0)
      (n1) edge (n+) edge (n1/0);

    \draw [arrow]
      (0/0) edge (1/0)
      (1/0) edge (2+/0)
      (2+/0) edge (n/0)
      (n/0) edge (n1/0)
      (n1/0) edge (n+/0);

    \draw [arrow]
      (n-1/n-1) edge (n/n-1)
      (n/n-1) edge (n1/n-1) edge (n/n)
      (n1/n-1) edge (n+/n-1) edge (n1/n);

    \draw [arrow]
      (n/n) edge (n1/n)
      (n1/n) edge (n+/n) edge (n1/n1);

    \draw [arrow] (n1/n1) to (n+/n1);
  \end{tikzpicture}
\end{center}
Any square (consisting of one or more grids)
having the unit type $\unittype$ at the bottom left
is a cofiber square
and generates a long exact sequence by Lemma~\ref{lma:LES}.
Three conclusions can be drawn by
choosing different cofiber squares:

First, we can zoom in on a grid on the diagonal:
    \begin{tightcenter}
      \begin{tikzpicture}
        \matrix[row sep=10mm,column sep=5mm] {
          \node (Xn/n-1) {$\quotient{X_{n}}{X_{n-1}}$};
          &
          \node (Xn+1/n-1) {$\quotient{X_{n+1}}{X_{n-1}}$};
          \\
          \node (1) {$\unittype$};
          &
          \node (Xn+1/n) {$\quotient{X_{n+1}}{X_{n}}$.};
          \\
        };
        \draw[arrow]
        (Xn/n-1) edge (1)
        (Xn/n-1) edge (Xn+1/n-1)
        (1) edge (Xn+1/n)
        (Xn+1/n-1) edge (Xn+1/n);
      \end{tikzpicture}
    \end{tightcenter}
    Through Lemma~\ref{lma:LES}, this grid generates the following exact sequence:
    \begin{tightcenter}
      \begin{tikzpicture}
        \matrix[row sep=20mm,column sep=4mm] {
          \node (zero1) {$\zerogroup$};
          &
          \node (ker) {$\ker\parens{\coboundary_{n+1}}$};
          \\[-17mm]
          \node (hnXn+1/n) {$h^n(\quotient{X_{n+1}}{X_n})$};
          &
          \node (hnXn+1/n-1) {$h^n(\quotient{X_{n+1}}{X_{n-1}})$};
          &
          \node (hnXn/n-1) {$h^n(\quotient{X_{n}}{X_{n-1}})$};
          \\[-6mm]
          \node (hn+1Xn+1/n) {$h^{n+1}(\quotient{X_{n+1}}{X_{n}})$};
          &
          \node (hn+1Xn+1/n-1) {$h^{n+1}(\quotient{X_{n+1}}{X_{n-1}})$};
          &
          \node (hn+1Xn/n-1) {$h^{n+1}(\quotient{X_{n}}{X_{n-1}})$.};
          \\[-17mm]
          &
          \node (coker) {$\coker\parens{\coboundary_{n+1}}$};
          &
          \node (zero2) {$\zerogroup$};
          \\
        };
        \draw
        (hnXn+1/n) edge[equiv] (zero1)
        (hnXn+1/n-1) edge[equiv] (ker);
        \draw[arrow]
        (hnXn+1/n) edge (hnXn+1/n-1)
        (hnXn+1/n-1) edge[>->] (hnXn/n-1)
        (hnXn/n-1) to [out=-140,in=40,edge node={node [edgelabel, above] {$\coboundary_{n+1}$}}] (hn+1Xn+1/n)
        (hn+1Xn+1/n) edge[->>] (hn+1Xn+1/n-1)
        (hn+1Xn+1/n-1) edge (hn+1Xn/n-1) ;
        \draw
        (hn+1Xn+1/n-1) edge[equiv] (coker)
        (hn+1Xn/n-1) edge[equiv] (zero2);
      \end{tikzpicture}
    \end{tightcenter}
    We choose the coboundary function $\coboundary_{n+1}$ to be the middle function in the diagram.
    Because $h^n(\quotient{X_{n+1}}{X_n})$ is trivial,
    from the exactness we know $h^n(\quotient{X_{n+1}}{X_{n-1}})$
    is isomorphic to the kernel of $\coboundary_{n+1}$
    and the group homomorphism from it is injective.
    Dually, we know $h^{n+1}(\quotient{X_{n+1}}{X_{n-1}})$
    is isomorphic to the cokernel of $\coboundary_{n+1}$
    and the group homomorphism to it is surjective.

  Secondly,
    let's turn our focus to this square:
    \begin{tightcenter}
      \begin{tikzpicture}
        \matrix[row sep=10mm,column sep=5mm] {
          \node (Xm) {$X_{m}$};
          &
          \node (Xm+1) {$X_{m+1}$};
          \\
          \node (1) {$\unittype$};
          &
          \node (Xm+1/m) {$\quotient{X_{m+1}}{X_{m}}$};
          \\
        };
        \draw[arrow]
        (Xm) edge (1)
        (Xm) edge (Xm+1)
        (1) edge (Xm+1/m)
        (Xm+1) edge (Xm+1/m);
      \end{tikzpicture}
    \end{tightcenter}
    which by Lemma~\ref{lma:LES} gives this exact sequence:
    \[
      \begin{tikzcd}[column sep=3mm]
        h^n(\quotient{X_{m+1}}{X_m})\ar[r] &
        h^n(X_{m+1})\ar[r] &
        h^n(X_{m})\ar[r] &
        h^{n+1}(\quotient{X_{m+1}}{X_m})
      \end{tikzcd}
    \]
    When $n \notin \braces{m,m+1}$, both $h^n(\quotient{X_{m+1}}{X_m})$ and $h^{n+1}(\quotient{X_{m+1}}{X_m})$ are trivial
    by Lemma~\ref{lma:mismatch};
    therefore, by the exactness of the above sequence, $h^n(X_{m+1}) \equiv h^n(X_{m})$.
    The insight is that cells at dimensions much higher or much lower than $n$
    are irrelevant to the cohomology group at dimension $n$.
    This implies that there are at most three different values of $h^n(X_m)$ up to isomorphism:
    \begin{enumerate}
      \item $h^n(X_{n-1}) \equiv h^n(X_{n-2}) \equiv \cdots \equiv h^n(X_{0}) \equiv \zerogroup$.
        The intuition is that $X_m$ for any $m<n$ does not have any interesting information at dimension $n$.
      \item $h^n(X_{n})$.
        $X_n$ has the cells at dimension $n$, but lacks the cells at dimension $\parens{n+1}$
        which may identify some cycles at dimension $n$.
      \item $h^n(X_{n+1}) \equiv h^n(X_{n+2}) \equiv \cdots \equiv h^n(X)$.
        Cells at dimension $\parens{n+2}$ or above play no role in the $n$th cohomology group. This also implies $h^n(X) \equiv h^n(X_0) \equiv \zerogroup$ when $n < 0$.
    \end{enumerate}
    It is thus sufficient to study $h^n(X_{n+1})$ for the $n$th cohomology group of $X$.

  Finally, we investigate the square
    \begin{tightcenter}
      \begin{tikzpicture}
        \matrix[row sep=10mm,column sep=5mm] {
          \node (Xn-2) {$X_{n-2}$};
          &
          \node (Xn+1) {$X_{n+1}$};
          \\
          \node (1) {$\unittype$};
          &
          \node (Xn+1/n-2) {$\quotient{X_{n+1}}{X_{n-2}}$};
          \\
        };
        \draw[arrow]
        (Xn-2) edge (1)
        (Xn-2) edge (Xn+1)
        (1) edge (Xn+1/n-2)
        (Xn+1) edge (Xn+1/n-2);
      \end{tikzpicture}
    \end{tightcenter}
    which generates the exact sequence
    \[
      \begin{tikzcd}[column sep=4mm]
        h^{n-1}(X_{n-2})\ar[r] &
        h^n(\quotient{X_{n+1}}{X_{n-2}})\ar[r] &
        h^n(X_{n+1})\ar[r] &
        h^n(X_{n-2})
      \end{tikzcd}
    \]
    From the previous cofiber square we know both $h^{n-1}(X_{n-2})$ and $h^n(X_{n-2})$ are trivial,
    and again by the exactness $h^n(\quotient{X_{n+1}}{X_{n-2}}) \equiv h^n(X_{n+1})$.
    Therefore, we have
    \[
      h^n(\quotient{X_{n+1}}{X_{n-2}}) \equiv h^n(X_{n+1}) \equiv h^n(X_{n+2}) \equiv \cdots \equiv h^n(X).
    \]
    This means it is sufficient to calculate $h^n(\quotient{X_{n+1}}{X_{n-2}})$.

Combining these three observations, we have the following commutative square for $n\geq 2$:
\begin{center}
  \begin{tikzpicture}
    \matrix[row sep=10mm,column sep=7mm,column 1/.style={anchor=base east},column 2/.style={anchor=base west}] {
      \node (coker) {$\coker\parens{\coboundary_n} \simeq h^n(\quotient{X_{n}}{X_{n-2}})$};
      &
      \node (answer) {$h^n(\quotient{X_{n+1}}{X_{n-2}}) \simeq h^n(X)$};
      \\
      \node (wedge) {$h^n(\quotient{X_{n}}{X_{n-1}})$};
      &
      \node (ker) {$h^n(\quotient{X_{n+1}}{X_{n-1}}) \simeq \ker\parens{\coboundary_{n+1}}$.};
      \\
    };
    \draw[arrow]
      (ker) edge[>->] (wedge)
      (ker) edge[transform canvas={xshift=-8mm}] (ker|-answer.south)
      (wedge) edge[->>,transform canvas={xshift=0mm}] (wedge|-coker.south)
      (answer) edge (coker);
  \end{tikzpicture}
\end{center}
We can further infer that the top homomorphism is injective and the right one is surjective
by applying Lemma~\ref{lma:LES} (and then Lemma~\ref{lma:mismatch}) to the following two cofiber squares, respectively,
\begin{tightcenter}
  \begin{tikzpicture}
    \matrix[row sep=10mm,column sep=3mm] {
      \node (Xn/n-2) {$\quotient{X_{n}}{X_{n-2}}$};
      &
      \node (Xn+1/n-2) {$\quotient{X_{n+1}}{X_{n-2}}$};
      \\
      \node (1) {$\unittype$};
      &
      \node (Xn+1/n) {$\quotient{X_{n+1}}{X_{n}}$};
      \\
    };
    \draw[arrow]
    (Xn/n-2) edge (1)
    (Xn/n-2) edge (Xn+1/n-2)
    (1) edge (Xn+1/n)
    (Xn+1/n-2) edge (Xn+1/n);
  \end{tikzpicture}
  \hspace{.5em}
  \begin{tikzpicture}
    \matrix[row sep=10mm,column sep=3mm] {
      \node (Xn-1/n-2) {$\quotient{X_{n-1}}{X_{n-2}}$};
      &
      \node (Xn+1/n-2) {$\quotient{X_{n+1}}{X_{n-2}}$};
      \\
      \node (1) {$\unittype$};
      &
      \node (Xn+1/n-1) {$\quotient{X_{n+1}}{X_{n-1}}$};
      \\
    };
    \draw[arrow]
    (Xn-1/n-2) edge (1)
    (Xn-1/n-2) edge (Xn+1/n-2)
    (1) edge (Xn+1/n-1)
    (Xn+1/n-2) edge (Xn+1/n-1);
  \end{tikzpicture}
\end{tightcenter}
and finally obtain the desired isomorphism
\[ h^n(X) \simeq \quotient{\ker\parens{\coboundary_{n+1}}}{\image\parens{\coboundary_n}}\]
by the following lemma from group theory:
\begin{lem}
  Let $Q \subseteq P$ be two subgroups of $G$ where $Q$ is normal.
  If we have a group $K$ and a commutative diagram as follows,
  where the group homomorphism from $P$ to $G$ is the canonical inclusion
  and the one from $G$ to $\quotient{G}{Q}$ is the quotienting,
  then $K \simeq \quotient{P}{Q}$.
  \begin{center}
    \begin{tikzpicture}
      \matrix[row sep=10mm,column sep=20mm] {
        \node (coker) {$\quotient{G}{Q}$};
        &
        \node (answer) {$K$};
        \\
        \node (wedge) {$G$};
        &
        \node (ker) {$P$};
        \\
      };
      \draw[arrow]
      (ker) edge[>->] (wedge)
      (ker) edge[->>] (answer)
      (wedge) edge[->>] (coker)
      (answer) edge[>->] (coker);
    \end{tikzpicture}
  \end{center}
\end{lem}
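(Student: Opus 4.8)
The plan is to realize $K$ as the copy of $\quotient{P}{Q}$ living inside $\quotient{G}{Q}$, exploiting that the map $K \to \quotient{G}{Q}$ is injective. Write $i : P \to G$ for the inclusion, $p : P \to K$ for the surjection, $q : G \to \quotient{G}{Q}$ for the quotient map, and $j : K \to \quotient{G}{Q}$ for the injection, so that commutativity of the square reads $q \circ i = j \circ p$.

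First I would analyze the composite $q \circ i : P \to \quotient{G}{Q}$. Since $Q \subseteq P \subseteq G$ with $Q$ normal, an element $x : P$ lies in its kernel exactly when $i(x) \in Q$, and because $P$ is a subgroup of $G$ this is equivalent to $x \in Q$; hence $\ker\parens{q \circ i} = Q$. By the first isomorphism theorem, $\quotient{P}{Q}$ is then isomorphic to $\image\parens{q \circ i}$, which is precisely the subgroup $q(P)$ of $\quotient{G}{Q}$. Next I would identify $\image(j)$ with this same subgroup: since $p$ is surjective we have $\image(j) = \image\parens{j \circ p}$, and commutativity gives $\image\parens{j \circ p} = \image\parens{q \circ i} = q(P)$. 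Finally, because $j$ is injective it induces an isomorphism $K \simeq \image(j)$ onto its image, so chaining isomorphisms yields $K \simeq q(P) \simeq \quotient{P}{Q}$.

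I do not anticipate a genuine obstacle: this is a short diagram chase, and since every object in sight is a set the classical reasoning transfers verbatim to \HoTT. The one step deserving care is the kernel computation $\ker\parens{q \circ i} = Q$, which is exactly where the hypothesis $Q \subseteq P$ (rather than merely $Q \subseteq G$) is used; without it the image of $q \circ i$ would be $\quotient{P}{P \cap Q}$ instead of $\quotient{P}{Q}$.
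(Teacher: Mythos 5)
Your proof is correct. Note that the paper itself gives no proof of this lemma at all: it is stated and immediately applied (with $G = h^n\parens{\quotient{X_n}{X_{n-1}}}$, $P = \ker\parens{\coboundary_{n+1}}$, $Q = \image\parens{\coboundary_n}$, $K = h^n(X)$, normality being automatic since these groups are abelian), so your write-up supplies a justification the paper treats as a standard fact of group theory. Your route is the natural one and every step checks out: $\ker\parens{q \circ i} = P \cap Q = Q$ precisely because $Q \subseteq P$, the first isomorphism theorem gives $\quotient{P}{Q} \simeq q(P)$, and the commutativity $q \circ i = j \circ p$ together with surjectivity of $p$ and injectivity of $j$ identifies $q(P) = \image(j) \simeq K$. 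Your closing remark correctly isolates where $Q \subseteq P$ (as opposed to $Q \subseteq G$) enters. Two minor points worth making explicit if this were written out in full: first, $\quotient{P}{Q}$ is well-formed because $Q$, being normal in $G$, is a fortiori normal in the intermediate subgroup $P$; second, in the set-level HoTT setting the image of a homomorphism and the first isomorphism theorem are available as in classical algebra (images being constructed with propositional truncation), so, as you say, the diagram chase transfers verbatim --- this matches how the corresponding step is handled in the Agda formalization, where all groups involved are sets.
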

By choosing $P$ to be the kernel and $Q$ the image,
we have the desired formula that $K$ is the quotient.
($Q$ is normal because it is a subgroup of an abelian group.)
This shows that for any $n \geq 2$,
$h^n(X)$ is isomorphic to the kernel-image quotient of adjacent coboundary functions $\coboundary_n$.
The cases for $n=0$ or $1$ can also be derived from the grid diagram similarly
but with special care of the distinguished point of $X_0 \deq A_0$
as demonstrated in the proof of Lemma~\ref{lma:zero-as-wedge}.
This concludes the proof of Lemma~\ref{lma:reformulate}.

\subsection{Equivalence of Two Cochain Complexes (Lemma~\ref{lma:cochain})}

Recall the isomorphisms and commutative squares we want:
\begin{align*}
  k_n : h^n\parens{\quotient{X_{n}}{X_{n-1}}} &\equiv \hom\parens[\big]{\formalsum{A_{n}}, h^0\parens{\booltype}}
  \\
  k_0 : h^0\parens{\booltype} \times h^0(X_0) &\equiv \hom\parens[\big]{\formalsum{A_0}, h^0\parens{\booltype}}
  \\
  k_{-1} : h^0\parens{\booltype} &\equiv \hom\parens[\big]{\ZZ, h^0\parens{\booltype}}
\end{align*}
  \[
    \begin{tikzcd}[column sep=2.5cm,row sep=1.2cm]
      h^n\parens{\quotient{X_{n+1}}{X_n}}\ar[d,equiv,"k_{n+1}"'] &
      h^{n-1}\parens{\quotient{X_{n}}{X_{n-1}}}
      \ar[l,"\coboundary_{n+1}"']\ar[d,equiv,"k_n"] \\
      \hom\parens[\big]{\formalsum{A_{n+1}}, h^0\parens{\booltype}} &
      \hom\parens[\big]{\formalsum{A_{n}}, h^0\parens{\booltype}}
      \ar[l,swap,"\hom\parens{\boundary_{n+1}, h^0\parens{\booltype}}"]
    \end{tikzcd}
  \]

  The isomorphisms $k_n$ for $n \geq 0$ arise from the type equivalences given by Lemmas~\ref{lma:quot-as-wedge}~and~\ref{lma:zero-as-wedge} and then the group isomorphisms by Lemma~\ref{lma:bouquet}.
  The isomorphism $k_{-1}$ is the canonical isomorphism between
  $h^0\parens{\booltype}$ and $\hom\parens{\ZZ,h^0\parens{\booltype}}$.

\begin{figure*}
  \centering
  \begin{tikzpicture}
    \matrix[row sep=13mm,column sep=10mm] {
      \node (hX1X0) {$h^1\parens{X_1/X_0}$};
      &
      \node (hX0) {$h^0\parens{X_0}$};
      &
      \node (h2-hX0) {$h^0\parens{\booltype} \times h^0\parens{X_0}$};
      &
      \node (h2) {$h^0\parens{\booltype}$};
      \\
      \node (A1h2) {$\displaystyle \prod_{A_1} h^0\parens{\booltype}$};
      &
      \node (A0neh2) {$\displaystyle \prod_{\sum_{x:A_0} (a \ne x)} h^0\parens{\booltype}$};
      &
      \node (h2-A0neh2) {$\displaystyle h^0\parens{\booltype} \times \prod_{\sum_{x:A_0} (a \ne x)} h^0\parens{\booltype}$};
      &
      \\
      &
      &
      \node (A0h2) {$\displaystyle \prod_{A_0} h^0\parens{\booltype}$};
      \\
      \node (ZA1-h2) {$\displaystyle \hom\parens{\formalsum{A_1},h^0\parens{\booltype}}$};
      &
      &
      \node (ZA0-h2) [xshift=-7mm] {$\displaystyle \hom\parens{\formalsum{A_0},h^0\parens{\booltype}}$};
      &
      \node (Z-h2) [xshift=7mm] {$\displaystyle \hom\parens{\ZZ,h^0\parens{\booltype}}$};
      \\
    };
    \draw [arrow] (Z-h2) -- (ZA0-h2) node [edgelabel, below] {$\hom\parens[\big]{\augmentation,h^0\parens{\booltype}}$};
    \draw [arrow] (ZA0-h2) -- (ZA1-h2) node [edgelabel, below] {$\hom\parens[\big]{\boundary_1,h^0\parens{\booltype}}$};
    \draw [equiv] (h2|-Z-h2.north) -- (h2) node [edgelabel, right] {$k_{-1}$};
    \draw [equiv] (A0h2|-ZA0-h2.north) -- (A0h2);
    \draw [equiv] (ZA1-h2) -- (A1h2);
    \draw [equiv] (A0h2) -- (h2-A0neh2) node [edgelabel, right] {\parbox[t]{3cm}{tricky; \\ see text}};
    \draw [arrow] (h2-A0neh2) -- (A0neh2) node [edgelabel, above] {$\snd$};
    \draw [arrow] (A0neh2) -- (A1h2);
    \draw [equiv] (h2-A0neh2) -- (h2-hX0);
    \draw [equiv] (A0neh2) -- (hX0);
    \draw [equiv] (A1h2) -- (hX1X0);
    \draw [arrow] (h2) -- (h2-hX0) node [edgelabel, above] {$\coboundary_0$};
    \draw [arrow] (h2-hX0) -- (hX0) node [edgelabel, above] {$\snd$};
    \draw [arrow] (hX0) -- (hX1X0) node [edgelabel, above] {Lemma~\ref{lma:LES}};
  \end{tikzpicture}
  \caption{A commutative diagram showing how the commutative square for $n = -1$ was broken down into smaller pieces. The commutativity of each smaller square is a critical step in the proof.}
  \label{fig:commute:degree-zero}
\end{figure*}

The zeroth dimension needs special attention
for the same reason given in the proof of Lemma~\ref{lma:zero-as-wedge}:
the point $a$ is used as the center of the wedge.
Therefore, we have to add one copy of $h^0\parens{\booltype}$ to the left hand side
to make ends meet.
A more detailed breakdown is the commutative diagram shown in Figure~\ref{fig:commute:degree-zero}.
There is one more subtlety regarding the middle isomorphism of the third column:
it would be naive to directly collect values arising from
$h^0\parens{\booltype}$ and $h^0\parens{X_0}$ (through $\prod_{\sum_{x:A_0}a\ne x} h^0\parens{\booltype}$)
and form a function in the product group $\prod_{A_0} h^0\parens{\booltype}$.
That unfortunately would not make all the needed squares commute.
The correct way to merge an element $g$ from $h^0\parens{\booltype}$
with an element $f$ from $\prod_{\sum_{x:A_0}a\ne x} h^0\parens{\booltype}$ is to construct a function $\bar{f}$ such that
\[
  \bar{f}(x) =
  \begin{cases}
  g &\text{~if $a=x$}
  \\
  f(x) \cdot g &\text{~if $a\ne x$}.
\end{cases}
\]
where the binary operation is the one
specified by the group $h^0\parens{\booltype}$.
The intuition is that in cohomology
we care about the difference between values on points,
not their absolute values;
thus, the natural way to maintain the difference
is to calculate the relative values
when the center $a$ is chomped
and restore them when the center $a$ revives.
The first component from $h^0\parens{\booltype}$ in the isomorphism
may be seen as the standard sea level
for the second component.

Once the correct isomorphisms are in place,
the commutative squares are relatively straightforward.
The simplest proof we found is to rephrase
the $\coboundary_n$ and the dualized $\boundary_n$,
through the isomorphisms $k_n$
and Proposition~\ref{prop:h-is-hom},
as homomorphisms between the groups
$\prod_{A_n} h^0\parens{\booltype}$
and
$\prod_{A_{n+1}} h^0\parens{\booltype}$.
When the sets $A_n$ are finite,
all these maps can be rephrased as the following:
\[
  \lam{f}\lam{x}[A_{n+1}]\sum_{y:A_{n}}{f(y)}^{\deg(\sigma_{x,y})}
\]
for some endo-arrows $\sigma_{x,y}$ of the spheres.
(Note that the $\sum$ here is the summation defined by the group structure,
not the sigma types.)
Therefore, it boils down to proving that the parameters $\sigma$ for the $\coboundary_n$
and the dualized $\boundary_n$ have the same degrees.
In our setting, the parameters for coboundary maps
are always the suspensions of those for boundary maps,
and thus share the same degree by the property~\ref{it:deg-susp}
of the degree function. This concludes the proof of Lemma~\ref{lma:cochain}.
The significance of the finiteness of $A_n$ is that
it turns group products into direct sums
and consequently gives an explicit description
of the inverse map in the \axiom{wedge} axiom of the \EilenbergSteenrod\ framework
and enables summation over cells at a dimension.

\section{Mechanization in Agda}

The main theorem (Theorem~\ref{thm:main}) along with the two lemmas (Lemmas~\ref{lma:reformulate}~and~\ref{lma:cochain})
have been mechanized in the proof assistant Agda,
using the \texttt{HoTT-Agda} library~\cite{hott-in:agda}.
The statement of the main theorem can be found at
\burl{https://github.com/HoTT/HoTT-Agda/blob/f46225f94285205724/theorems/cw/cohomology/AxiomaticIsoCellular.agda#L18-L25}.

In order to mechanize the proofs in our work,
significant improvements have been done to the general framework
and it is difficult to quantify them.
Already there are more than 4,000 lines of code
written specifically for the work presented in this paper,
among which much engineering was done to make it type-check
within reasonable time and memory space.

The code has been organized into two major parts:
\begin{description}
  \item[Restruction of cochain complex]
    This corresponds to Lemma~\ref{lma:reformulate}.
    Note that this part only assumes
    a limited form of the axiom of choice (used by the \axiom{wedge} axiom),
    decidable equality on $A_n$ (for boundary maps),
    and that degree maps have finite supports.
    Finiteness of $A_n$ is not explicitly required.
  \item[Equivalence between two cochain complexes]
    This corresponds to Lemma~\ref{lma:cochain}.
\end{description}

\section{Conclusion and further work}
\label{sec:conclusion}

The main theorem in this paper dealt with the case where every cell set $A_n$
was finite and the dimension of the complex is finite.
A first, relatively straight-forward extension,
is to consider infinite-dimensional
cell complexes with finite cell sets.

We believe that it should be possible to further generalize
the result to (certain) infinite cell sets as follows.
Recall that the spheres are \emph{$\omega$-compact}~\cite[Chapter 7]{egbert-as:thesis}: if
$X : \omega \to \UU$ is a sequential diagram of types, then any
$f : \sphere n \to \varinjlim_k X_k$ factors through some $X_k$.
More precisely:
\begin{defi}
  A type $C$ is \emph{$\omega$-compact} if for any sequential diagram
  $X : \omega \to \UU$, the canonical map
  \[
    \varinjlim\nolimits_k (C \to X_k)
    \to \bigl(C \to \varinjlim\nolimits_k X_k\bigr)
  \]
  is an equivalence.
\end{defi}
\begin{conj}
  If each $A_n$ is either finite or isomorphic to $\NN$, then each stage of a cell complex $X_n$
  is a sequential colimit.
\end{conj}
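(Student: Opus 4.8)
The plan is to argue by induction on $n$, using that pushouts commute with sequential colimits together with the $\omega$-compactness of spheres. For the base case, $X_0 \deq A_0$ is either finite---hence the colimit of a constant sequence---or a copy of $\NN$, which is $\varinjlim_m \mathrm{Fin}(m)$; either way $X_0$ is a sequential colimit. For the inductive step I would assume $X_n \simeq \varinjlim_k Y_k$ for some sequential diagram $Y : \omega \to \UU$ and exhibit $X_{n+1}$, the pushout of $A_{n+1} \xleftarrow{\fst} A_{n+1} \times \sphere{n} \xrightarrow{\alpha_{n+1}} X_n$, as a sequential colimit.

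First I would treat the case where $A_{n+1}$ is finite. Then $A_{n+1} \times \sphere{n}$ is a finite coproduct of spheres, and since $\omega$-compact types are closed under finite coproducts (a map out of $C_1 + C_2$ is a pair of maps, and finite products commute with sequential colimits), it is itself $\omega$-compact. Hence the attaching map $\alpha_{n+1} : A_{n+1}\times\sphere{n} \to \varinjlim_k Y_k$ factors through some stage $Y_{k_0}$; after reindexing we may take $k_0 = 0$ and obtain compatible maps $\tilde\alpha_k : A_{n+1}\times\sphere{n}\to Y_k$. Setting $Z_k$ to be the pushout of $A_{n+1} \xleftarrow{\fst} A_{n+1}\times\sphere{n} \xrightarrow{\tilde\alpha_k} Y_k$ yields a sequential diagram whose colimit is, by commutation of the pushout with the sequential colimit, the pushout of $A_{n+1} \leftarrow A_{n+1}\times\sphere{n} \to \varinjlim_k Y_k$, that is, $X_{n+1}$. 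Thus $X_{n+1}\simeq\varinjlim_k Z_k$.

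For the remaining case $A_{n+1}\simeq\NN$ the attaching map has domain the \emph{countable} coproduct $\coprod_{\NN}\sphere n$, which need not be $\omega$-compact, so a single factorization is unavailable; this is exactly where the two sources of infinity must be interleaved. I would filter the cells by the finite subsets $\mathrm{Fin}(m)\hookrightarrow\NN$ and let $W_m$ be the result of attaching only the first $m$ cells, i.e. the pushout of $\mathrm{Fin}(m) \xleftarrow{\fst}\mathrm{Fin}(m)\times\sphere n \xrightarrow{\alpha_{n+1}} X_n$. Since $\NN\simeq\varinjlim_m\mathrm{Fin}(m)$ and pushouts commute with this colimit in the cell variable, $X_{n+1}\simeq\varinjlim_m W_m$; and each $W_m$ attaches a finite set of cells to $X_n$, so the finite case already shows $W_m\simeq\varinjlim_k Z^{(m)}_k$ is a sequential colimit. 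Assembling these into a diagram over $\omega\times\omega$ and invoking cofinality of the diagonal $\omega\to\omega\times\omega$ then presents $X_{n+1}$ as a single sequential colimit.

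The two structural lemmas---that pushouts commute with sequential colimits, and that the diagonal is cofinal so that a sequential colimit of sequential colimits is again a sequential colimit---are standard for homotopy colimits, and $\omega$-compactness of the spheres is already in hand. I expect the main obstacle to be the final assembly in the case $A_{n+1}\simeq\NN$: turning the levelwise equivalences $W_m\simeq\varinjlim_k Z^{(m)}_k$ into a genuine bifunctor $\omega\times\omega\to\UU$ with all connecting maps and homotopies coherent, so that the diagonal argument applies. In informal homotopy theory this interleaving is routine, but mechanizing it---tracking the factorizations produced by $\omega$-compactness coherently across both indices---is the delicate part, and is presumably why the statement is recorded here as a conjecture.
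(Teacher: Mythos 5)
First, note that the paper does not prove this statement: it is recorded as a conjecture and explicitly left for future work, so there is no official proof to match against. The paper only sketches an intended route---a natural map from a sequential colimit of finite subcomplexes into $X_n$, shown to be an equivalence using $\omega$-compactness---and warns that, depending on how this is formalized, countable choice may be needed. Your proposal elaborates essentially this interleaving idea, and your finite case is sound (modulo the remark at the end), but your treatment of the case $A_{n+1} \simeq \NN$ has a genuine gap, and it is not the one you name.

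The issue is where the factorizations come from. $\omega$-compactness identifies $C \to \varinjlim_k Y_k$ with the sequential colimit $\varinjlim_k (C \to Y_k)$, and an element of a sequential colimit is only \emph{merely} equal to one coming from a finite stage; so ``$\alpha_{n+1}$ factors through some $Y_{k_0}$'' holds only under propositional truncation. In the finite case this is harmless, provided the statement being proved is the \emph{mere} existence of a presentation (which is also what lets you untruncate the induction hypothesis). But in the $\NN$ case your assembly requires a factorization to be \emph{chosen} for each $m$ (equivalently, for each cell $i : \NN$), and all of these choices must be in hand simultaneously before the bifunctor $\omega \times \omega \to \UU$ can even be written down. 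Passing from ``for every $i$ there merely exists a factorization'' to ``there merely exists a family of factorizations for all $i$'' is precisely the countable axiom of choice, which is not provable in HoTT---the paper cites Coquand--Mannaa--Ruch for exactly this point. So the step you describe as routine in informal homotopy theory but delicate to mechanize is in fact not available in \HoTT\ at all without an extra assumption (or a cleverer, choice-free definition of the approximating diagram); this, rather than coherence bookkeeping, is why the paper flags countable choice and records the statement as a conjecture. A secondary point: as literally stated, ``is a sequential colimit'' is vacuous (take the constant diagram), and your induction hypothesis ``$X_n \simeq \varinjlim_k Y_k$ for some sequential diagram'' is correspondingly vacuous---your finite-case argument applied to the constant diagram proves nothing. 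The intended content is a colimit of \emph{finite} complexes, so finiteness of the stages $Y_k$ must be carried through the induction; that is also what makes the appeal to $\omega$-compactness do any work.
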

Depending on how precisely this is formalized,
it is possible that countable choice is needed:
there should be a natural map from
a sequential colimit of finite subcomplexes into $X_n$,
and $\omega$-compactness of spheres should ensure
this is an equivalence.

Note that classically,
almost all cell complexes we want to consider have
explicitly given cell sets that are either finite
or isomorphic to $\NN$.
It is also worth remarking that assuming
countable choice may not interfere with
a computational interpretation of type theory,
since countable choice holds
in realizability models of extensional type theory.
(On the other hand,
countable choice is not provable in homotopy type theory,
as shown in~\cite{coquand-mannaa-ruch}.)
We leave these deliberations for future work.

Another obvious line of inquiry
is to use our setup to do cellular homology theory.
Once a good library has been developed for homology theory,
we in fact expect that the result corresponding
to our Theorem~\ref{thm:main} will be simpler
in the homology setting.

\smallskip

We can obtain a direct comparison between our results
concerning bare CW complexes,
and results about usual CW complexes as topological spaces
using the real-cohesive type theory of~\cite{shulman-brouwer}.
A CW complex in topological spaces
corresponds to a CW complex in the sense of Section~\ref{sec:cw},
but with all spheres replaced by their topological counterparts,
and the top map in the pushout defining $X_{n+1}$ replaced by
$\mathrm{id}_{A_{n+1}} \times \mathrm{in}_n$,
where $\mathrm{in}_n$ is the inclusion of the unit sphere
into the unit ball in $\mathbb{R}^{n+1}$.
The shape functor will take such a topological CW complex
to a CW complex in our sense,
since it preserves colimits and maps the topological spheres and balls
to their homotopy types.

\smallskip

Eventually, it is also our hope that our work can be part of
a full formalization of practical algorithms
used in constructive algebraic topology,
such as those in the Kenzo system~\cite{kenzo}.

\section*{Acknowledgment}

\newcommand{\grantsponsor}[3]{#2}
\newcommand{\grantnum}[2]{#2}

  We want to acknowledge the valuable comments
  that we received
  on earlier drafts and presentations
  of the work in this paper
  from
  Carlo Angiuli, 
  Guillaume Brunerie, 
  Evan Cavallo, 
  Floris van Doorn, 
  Robert Harper, 
  Egbert Rijke, 
  and anonymous reviewers. 

  This research was sponsored by
  the \grantsponsor{nsf}{National Science Foundation}{https://www.nsf.gov/}
  under grant number \grantnum{nsf}{DMS-1638352}
  and
  the \grantsponsor{afosr}{Air Force Office of Scientific Research}{http://www.wpafb.af.mil/afrl/afosr/}
  under grant number \grantnum{afosr}{FA9550-15-1-0053}.
  The authors would also like to thank the Isaac Newton Institute for Mathematical Sciences
  for its support and hospitality during the program ``Big Proof''
  when part of the work on this paper was undertaken;
  the program was supported by \grantsponsor{esprc}{Engineering and Physical Sciences Research Council}{https://www.epsrc.ac.uk/}
  under grant number \grantnum{esprc}{EP/K032208/1}.
  The views and conclusions contained in this document are those of the authors
and should not be interpreted as representing the official policies, either expressed or implied, of any
sponsoring institution, government or any other entity.

\bibliographystyle{alphaurl}
\bibliography{cohomology}
\end{document}